 \definecolor{BLACK}{gray}{0}
 \definecolor{WHITE}{gray}{1}
 \definecolor{RED}{rgb}{1,0,0}
 \definecolor{GREEN}{rgb}{0,1,0}
 \definecolor{BLUE}{rgb}{0,0,1}
 \definecolor{CYAN}{cmyk}{1,0,0,0}
 \definecolor{MAGENTA}{cmyk}{0,1,0,0}
 \definecolor{YELLOW}{cmyk}{0,0,1,0}
\theoremstyle{plain}
\newtheorem{thm}{\protect\theoremname}
  \theoremstyle{definition}
  \newtheorem{defn}[thm]{\protect\definitionname}
  \theoremstyle{plain}
  \newtheorem{cor}[thm]{\protect\corollaryname}
  \theoremstyle{plain}
  \newtheorem{fact}[thm]{\protect\factname}
  \theoremstyle{plain}
  \newtheorem{lem}[thm]{\protect\lemmaname}
\theoremstyle{plain}
  \newtheorem*{lem*}{\protect\lemmaname}	
\theoremstyle{plain}
\newtheorem*{thm*}{\protect\theoremname}
  \providecommand{\corollaryname}{Corollary}
  \providecommand{\definitionname}{Definition}
  \providecommand{\factname}{Fact}
  \providecommand{\lemmaname}{Lemma}
\providecommand{\theoremname}{Theorem}
\newcommand{\cA}{\mathcal{A}}
\newcommand{\cH}{\mathcal{H}}
\newcommand{\cN}{\mathcal{N}}
\newcommand{\id}{\mathbbm{1}}
\newcommand{\mathbbm}[1]{\text{\usefont{U}{bbm}{m}{n}#1}}
\newcommand{\ket}[1]{ | #1 \rangle}
\newcommand{\ketbra}[2]{|#1\rangle\langle#2|}
\newcommand{\be}{\begin{equation}}
\newcommand{\ee}{\end{equation}}
\newcommand{\bq}{\begin{eqnarray}}
\newcommand{\eq}{\end{eqnarray}}
\begin{document}

\title{Anyons are not energy eigenspaces of quantum double Hamiltonians}

\author{Anna K\'om\'ar}
\affiliation{Institute for Quantum Information and Matter and Walter Burke Institute for Theoretical Physics, California Institute of Technology, Pasadena, California 91125, USA}
\author{Olivier Landon-Cardinal}
\affiliation{Departement of Physics, McGill University, Montr\'eal, Qu\'ebec, Canada H3A 2T8}
\begin{abstract}

Kitaev's quantum double models, including the toric code, are canonical examples of quantum topological models on a 2D spin lattice. Their Hamiltonian defines the groundspace by imposing an energy penalty to any nontrivial flux or charge, but does not distinguish among those. We generalize this construction by introducing a novel family of Hamiltonians made of commuting four-body projectors that provide an intricate splitting of the Hilbert space by discriminating among non-trivial charges and fluxes. Our construction highlights that anyons are not in one-to-one correspondence with energy eigenspaces, a feature already present in Kitaev's construction. This discrepancy is due to the presence of local degrees of freedom in addition to topological ones on a lattice. 

\end{abstract}
\maketitle

\section{Introduction}

Interacting topological spin models are of interest in the field of condensed matter theory and quantum information due to their promising properties to encode quantum information into their degenerate groundspace. The different ground states can be labeled through a topological property of the system, e.g., by the equivalency classes of the different non-contractible loops on a torus. The quantum information encoded into a ground state can be recovered by performing error correction, even after a long time provided only local coherent errors are introduced by the environment. In this sense, topological systems are inherently robust to decoherence.

One of the first proposals for a topological quantum code is the \emph{toric code} by Kitaev \cite{Kitaev03}. This is a two-dimensional system with periodic boundary conditions, i.e., with a toroidal geometry, where physical spin-1/2 particles or \emph{qubits} live on edges of a 2D square lattice. This model has a four-fold degenerate groundspace, the groundspace encodes two logical qubits. Any local operator acts trivially within the groundspace whereas operators acting on a large number of qubits residing on a non-contractible loop going around the torus act non-trivially. An experimentally more feasible version of the toric code is the \emph{surface code} \cite{BK98,FMM+12}, which is a two-dimensional system with physical qubits still placed on edges of a lattice, but the boundaries are now open. Several experimental groups currently pursue the physical realization of surface codes~\cite{BKM+14,CMS+15} with the goal to use them as building blocks in a quantum computer.

The toric code belongs to a more general class of topological systems known as \emph{quantum doubles}, introduced by Kitaev~\cite{Kitaev03}. These are spin systems on a 2D lattice, whose excitations are point-like and they correspond to (non-abelian) anyons. Excitations of the quantum double of group $G$ correspond to the anyons described by the mathematical construction~\cite{Drinfeld86} known as the Drinfeld double $\mathcal{D}(G)$. For instance, the toric code is the quantum double $\mathcal{D}(\mathbb{Z}_2)$ based on the group $\mathbb{Z}_2$.

The excitations of a topological quantum field theories are indistinguishable quasi-particles called anyons: \emph{abelian} if taking anyons around each other modifies their wave function by only a phase, and \emph{non-abelian} if taking certain anyons around one another applies a nontrivial unitary operation to their wave function. In topological quantum field theories (TQFT), anyons carry a (nontrivial) charge or flux and are accordingly grouped into \emph{chargeons}, \emph{fluxons} and \emph{dyons} when they carry both a (nontrivial) charge and a (nontrivial) flux.

Quantum double models were introduced by Kitaev as a lattice realization of topological quantum field theories~\cite{Kitaev03}. Those models are defined by a Hamiltonian whose groundspace is spanned by vacuum states, i.e., states with no flux nor charge present. More precisely, the Hamiltonian imposes an energy penalty equal to the number of nontrivial charges or fluxes present.

Anyons are point-like excitations that appear on a site of the lattice. They are labeled by irreducible representations (irreps) of the Drinfeld double $\mathcal{D}(G)$. However, the spatial scale inherent to the lattice breaks the purely topological properties of the model and introduces \emph{local degrees of freedom}. In particular, anyon types are not in one-to-one correspondence with energy eigenspaces. Indeed, two anyons of the same type can have different energies depending on the local degrees of freedom. This peculiar feature is already present in Kitaev's original Hamiltonian but is more explicit in the family of Hamiltonians we introduce in this paper. Those novel Hamiltonians generalize Kitaev's original proposal since they have additional local terms which allow to distinguish among the different nontrivial fluxes and charges.    

In this paper, we introduce a family of Hamiltonians that assigns different energies to the different nontrivial fluxes and charges of non-abelian quantum doubles $\mathcal{D}(G)$. In these \emph{refined Hamiltonian}, each term only acts on four neighboring higher-dimensional spins (a.k.a qudits in the quantum information jargon). Moreover, each 4-local terms commute pairwise, resulting in a Hamiltonian which can be solved explicitly. Our construction is qualitatively different than the 6-local terms introduced in~\cite{BM08} since our family of Hamiltonian maintain the feature that Hamiltonian term are either related to the charges or to the fluxes. We then show how the 4-local charge and flux projectors assign different energies to excitations by partitioning the Hilbert space of excitations according to charge and flux labels related to the representation theory of the group $G$. Our construction emphasizes a feature already present in Kitaev's original proposal that anyons are not in one-to-one correspondence with energy eigenspaces of the Hamiltonian due to the presence of local degrees of freedom.

Throughout the paper, we illustrate the notions we introduced by analyzing the quantum double for the smallest non-abelian group $S_3$, the symmetry group of order 3, whose quantum double structure was explored in~\cite{BAC09,BSW11}. We explicitly write down the 4-local refined Hamiltonian for this theory, see Eq.~\eqref{eq:DS3_Hamiltonian}.

The paper is organized as follows. 
First, in Sec.~\ref{sec:quantum_double} we review the most important properties of non-abelian anyons, and introduce the quantum double construction. 
Second, in Sec.~\ref{sec:tunable_Hamiltonian} we introduce the general charge and flux projectors and construct the 4-local refined Hamiltonian, see Theorem~\ref{thm:main-thm}. 
We analyze how these projectors partition the Hilbert space of each site in Sec.~\ref{sec:Hilbert-splitting} and introduce a diagrammatic representation to visualize this partitioning, see Fig.~\ref{fig:anyon_partitioning}. 
This diagrammatic representation reveals that anyons are not in one-to-one correspondence with energy eigenspaces of the Hamiltonian due to the presence of local degrees of freedom. We explore how those local degrees of freedom arise out of the spatial scale introduced by the lattice in Sec.~\ref{sec:local-dof}.
Finally, we conclude our findings and point out future directions in Sec.~\ref{sec:conclusions}.

\section{The Drinfeld double construction and the quantum double models}
\label{sec:quantum_double}

The quantum double construction realizes topological lattice spin models whose anyonic excitations are described mathematically by the Drinfeld double of a group. To better appreciate the quantum double construction, we first review the properties and mathematical formalism of non-abelian anyons in general. First, in Sec.~\ref{subsec:aharonov-bohm} we give an overview of the anyon labels and the most important braiding properties. This pedagogical exposition is largely inspired from John Preskill's lecture notes~\cite{Preskill98} and the reader is encouraged to consult those notes for more details. Then we introduce the quantum double construction on a lattice in Sec.~\ref{subsec:quantum_double}. 

\subsection{Non-Abelian Aharonov-Bohm effect}
\label{subsec:aharonov-bohm}

Anyons can be understood by analogy to the Aharonov-Bohm effect: taking a charge $q$ around a flux tube with flux $\Phi$ results in the wave function acquiring a phase $\exp (i q \Phi)$.
\begin{equation}
 \ket{\psi} \to \exp (i q \Phi) \ket{\psi}
\end{equation}

Non-Abelian anyons can be qualitatively understood by generalizing the Aharonov-Bohm effect to fluxes whose possible values correspond to the elements $g$ of a group $G$ and the charge possible values are the irreducible representations (irreps) $\Gamma$ of $G$. In other words, the Hilbert space of each quasiparticle is spanned either by the flux orthonormal basis
\begin{equation}
 \cH=\mbox{span} \{\ket{g}\}_{g\in G}.
\end{equation}
or in a conjugate charge orthonormal basis 
\begin{equation}
 \cH=\mbox{span} \{\ket{\Gamma,i}\}_{\mathrm{irrep }\Gamma,i=1\dots |\Gamma|}
\end{equation}
in which we chose an (arbitrary) orthonormal basis $\{\ket{\Gamma,i}_{i=1\dots |\Gamma|}\}$ for every module of each irrep $\Gamma$.

\subsubsection{Labeling fluxons}

To identify a fluxon, we can check how the basis transforms when a charge $\Gamma$ is transported around the fluxon
\begin{equation}\label{eq:interferometry}
 \ket{\Gamma,j} \to \sum_{i=1}^{|\Gamma|} D^{\Gamma}_{ij}(a) \ket{\Gamma,i}
\end{equation}
Since the matrix elements $D^{\Gamma}_{ij}(a)$ can in principle be measured by interferometry~\cite{Bonderson07}, performing this for every charge type $\ket{\Gamma,j}$ will reveal the flux $a\in G$.

However, labeling fluxons by group elements is not gauge-invariant since another observer could choose another orthonormal basis for the module of the irrep $\Gamma$. In fact, the correct gauge-invariant quantity to label fluxons is the \emph{conjugacy class}:
\begin{defn}[Conjugacy class]
\be
C_a = \{ g a g^{-1} | g \in G \} .
\ee
\end{defn}
Indeed, two observers will agree on the conjugacy class of a fluxon even if they probably would disagree on the representative group element within the conjugacy class.

\subsubsection{Braiding of fluxons}

We now want to understand what happens when braiding fluxons. Let's consider two fluxons side by side. The left fluxon has flux $a$ while the right fluxon has flux $b$ (locally, flux types are well defined). Let's now counterclockwise exchange the fluxons, resulting in an operator $R_{ab}$. One can prove that the resulting effect is 
\begin{equation}
 R_{ab} \: : \ket{a,b} \mapsto \ket{aba^{-1},a}
\end{equation}
i.e., the right flux has been conjugated by the left flux. See Fig.~\ref{fig:braiding} for a pictorial representation.

\begin{figure}
\begin{centering}
\includegraphics[width=0.3\textwidth]{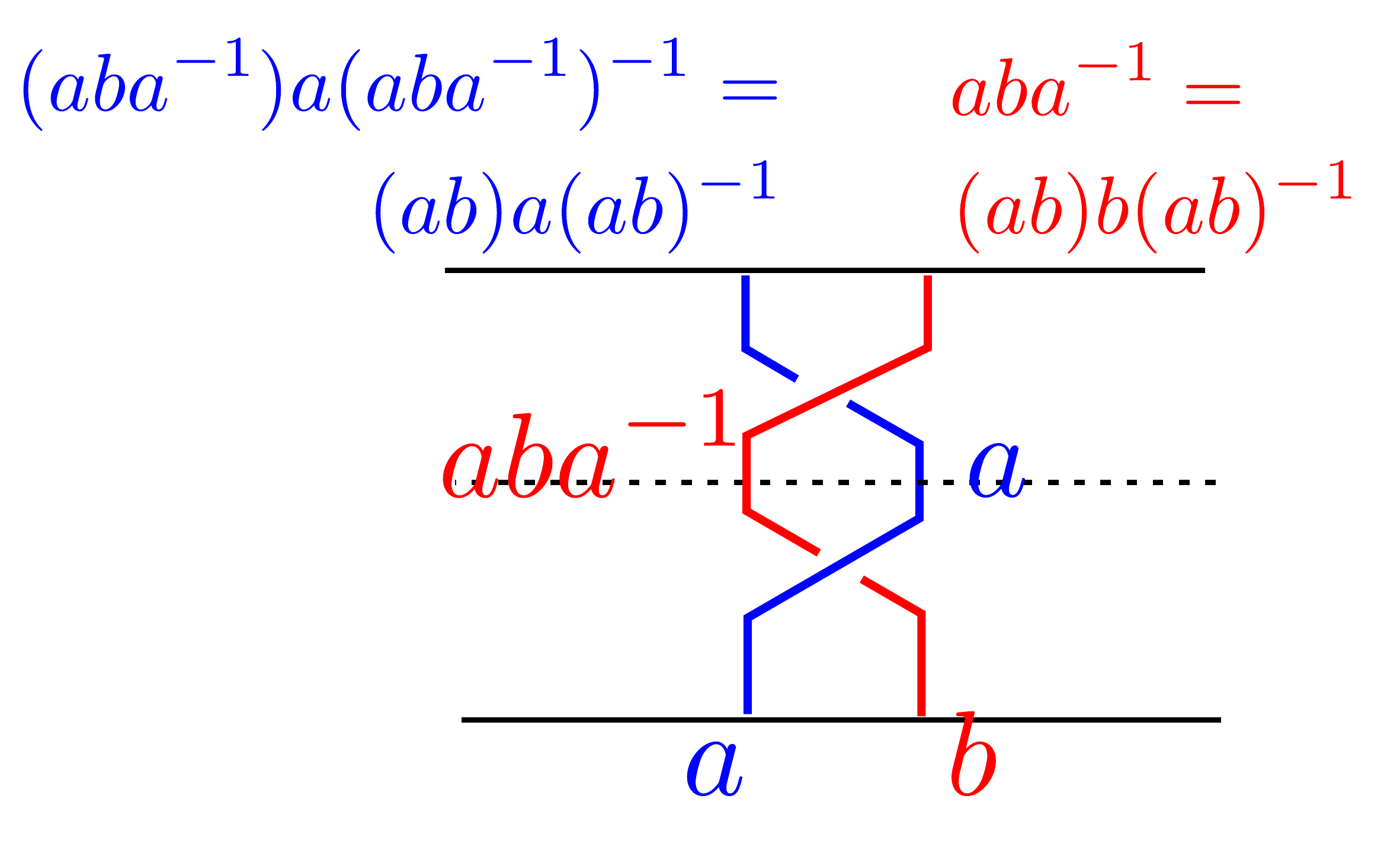}
\caption{(Color online) Braiding of two anyons, $a$ and $b$: applying a counterclockwise exchange of the particles, resulting in conjugacy of the original wave function.}
\label{fig:braiding}
\end{centering}
\end{figure}

Note that two successive counterclockwise exchange is equivalent to having the rightmost flux going around the leftmost flux counterclockwise, see Fig.~\ref{fig:braiding}. The net result of that operation is 
\begin{equation}
 R^2_{ab} \: : \ket{a,b} \mapsto \ket{(ab)a(ab)^{-1},(ab)b(ab)^{-1}}
\end{equation}
which is coherent with the claim that the conjugacy class of a fluxon is gauge-invariant but the representative is ambiguous since it can change by an arbitrarily far away fluxon moving around it.

\subsubsection{Dyon: anyon with nontrivial flux and nontrivial charge}

While we have discussed how to label a chargeon (by an irrep) and a fluxon (by a conjugacy class), we have yet to discuss anyons that exhibit both a nontrivial charge and a nontrivial flux. Such an anyon is called a \emph{dyon}. Suppose we wanted to measure the charge of a dyon. We could set up an interferometric experiment. We could place the dyon behind the slits in a double slit experiment and measure the interferometry pattern for any incoming test fluxon. However, since the dyon also carries flux, subtleties arise. Indeed, the passage of the test fluxon either to the left or the right of the dyon will modify the flux of the dyon. Thus, interference will only occur if the flux $a$ of the dyon commutes with the flux $b$ of the test fluxon, i.e., if $ab=ba$. In other words, the charge $\Gamma$ of the dyon can be determined only if the probe fluxon has a flux among the elements $b$ commuting with $a$, i.e., within the \emph{normalizer} of $a$ 
\begin{defn}[Normalizer]
\be
\mathcal{N}_a = \{ b \in G | ab = ba \} .
\ee
\end{defn}
Note that a normalizer is always a subgroup of the group $G$. We thus conclude that the charge $\Gamma$ of a dyon carrying flux $a$ is not an irrep of the full $G$, but rather an irrep of the normalizer $\mathcal{N}_a$. 

The mathematical structure corresponding to an anyon model is the Drinfeld double of a group which is a quasi-triangular Hopf algebra. Anyon types are in one-to-one correspondence with the irreps of that operator algebra. 
Working out the irreps of the Drinfeld double only requires knowledge of the representation theory of the underlying group, since a key mathematical result is that irreps of a Drinfeld double are labeled by i) a conjugacy class and ii) an irrep of the normalizer of any element of the conjugacy class (which are all isomorphic). 

\subsubsection{Quantum dimension of an anyon}

In a Drinfeld double, the quantum dimension $d_a$ associated to every anyon type $a$ is the dimension of the vector subspace associated to that anyon. It is thus an integer. Given an anyon type $(C_g,\,\Gamma )$, its quantum dimension is 
\begin{equation}
 d_{(C_g,\,\Gamma )} = |C_g| |\Gamma|. \label{eq:Qdim-anyon}
\end{equation}
Moreover, another quantity of interest is the total quantum dimension $\mathcal{D}$ of the model, which is related to the quantum dimension of every anyon type by
\begin{equation}
 \mathcal{D}^2= \sum_{\mathrm{anyons}\:k} d_k^2. \label{eq:def-totalQdim}
\end{equation}
In the case of a quantum double, the total quantum dimension is related to the cardinality of the group
\begin{equation}
 \mathcal{D}^2= |G|^2. \label{eq:totalQdim-Qdouble}
\end{equation}
This result might appear as mysterious: we will give an interpretation of this result in Sec.~\ref{sec:diagram}.

\subsubsection{Example of $\mathcal{D}(S_{3})$}
\label{subsec:D(S_3)}

As a more elaborate example of the above quantum double structure, let's look at the quantum double of the smallest non-abelian group, $\mathcal{D}(S_{3})$. The group $S_3$ is isomorphic to the symmetry transformations of an equilateral triangle (see Fig.~\ref{fig:triangle_symmetries}):
\begin{itemize}
	\item identity: $e$,
	\item rotations by $\pi/3$ and $2 \pi/3$: $y$, $y^2$,
	\item mirrorings to the three different axes: $x$, $xy$, $xy^2$ .
\end{itemize}

Because of the nature of these symmetries: $y^3 = e$ and $x^2 = (xy)^2 = (xy^2)^2 = e$. The non-abelianity of $\mathcal{S}_3$ is summed up by the commutation relation $xy=y^2 x$.

\begin{figure}
\begin{centering}
\includegraphics[width=0.2\textwidth]{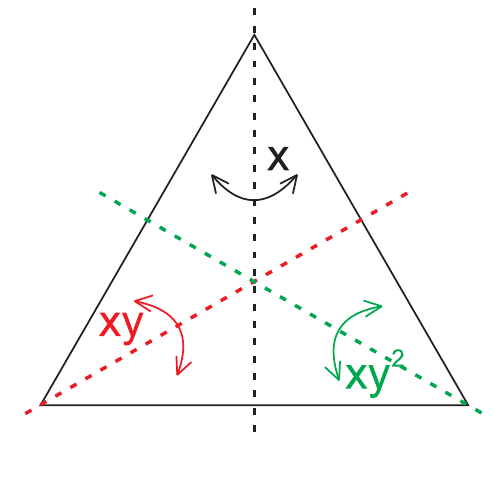}\hspace{0.5cm}
\includegraphics[width=0.2\textwidth]{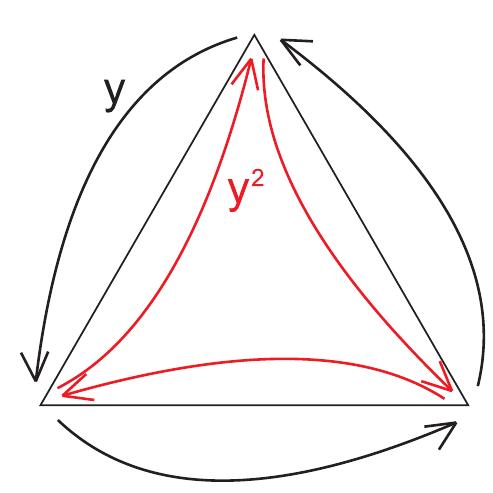}
\caption{(Color online) Symmetries of an equilateral triangle, or elements of the group $S_3$.}
\label{fig:triangle_symmetries}
\end{centering}
\end{figure}

The anyons of the Drinfeld double of $S_3$ are labeled by the conjugacy classes of $S_3$ and the irreducible representations of normalizers of conjugacy classes. There are three conjugacy classes of $S_3$:
\bq
C_e &=& \{ e \}, \\
C_y &=& \{ y, y^2 \}, \\
C_x &=& \{ x, xy, xy^2 \},
\eq
and the corresponding normalizers are
\bq
\cN_e &=& S_3 ,\\
\cN_y = \cN_{y^2} &=& \{ e, y, y^2 \} \cong \mathbb{Z}_3 ,\\
\cN_x &=& \{ e, x \} \cong \cN_{xy} \cong \cN_{xy^2} \cong \mathbb{Z}_2 .
\eq

We would like to point out here that while the normalizers $\cN_y$ and $\cN_{y^2}$ are the same, independent of the labeling, $\cN_x$, $\cN_{xy}$ and $\cN_{xy^2}$ are distinct, and only isomorphic to each other. 

The irreducible representations of all these normalizers are listed in Tables~\ref{tab:irreps_of_S3}-\ref{tab:irreps_of_Z3_Z2}. There and in the remainder of the paper $\omega = \exp(2\pi i/3)$ and $\bar{\omega} = \exp(4\pi i/3)$ are the third complex roots of unity.

\begin{table}
	\centering
		\begin{tabular}{|c||cccccc|}
			\hline
			$S_3$ & $e$ & $y$ & $y^2$ & $x$ & $xy$ & $xy^2$ \\
			\hline
			\hline
			$\Gamma_1^{S_3}$ & 1 & 1 & 1 & 1 & 1 & 1 \\
			\hline
			$\Gamma_{-1}^{S_3}$ & 1 & 1 & 1 & -1 & -1 & -1 \\
			\hline
			$\Gamma_{2}^{S_3}$ & $\begin{pmatrix} 1 & 0 \\ 0 & 1 \end{pmatrix}$ 
			& $\begin{pmatrix} \bar{\omega} & 0 \\ 0 & \omega \end{pmatrix}$ 
			& $\begin{pmatrix} \omega & 0 \\ 0 & \bar{\omega} \end{pmatrix}$ 
			& $\begin{pmatrix} 0 & 1 \\ 1 & 0\end{pmatrix}$ 
			& $\begin{pmatrix} 0 & \omega \\ \bar{\omega} & 0\end{pmatrix}$ 
			& $\begin{pmatrix} 0 & \bar{\omega} \\ \omega & 0\end{pmatrix}$ \\
			\hline
		\end{tabular}
		\caption{Irreducible representations of $S_3$, i.e. the possible charge labels with flux $C_e$.}
		\label{tab:irreps_of_S3}
\end{table}

\begin{table}
	\centering
	\begin{tabular}{|c||ccc|}
			\hline
			$\mathbb{Z}_3$ & $e$ & $y$ & $y^2$ \\
			\hline
			\hline
			$\Gamma_1^{\mathbb{Z}_3}$ & 1 & 1 & 1 \\
			\hline
			$\Gamma_{\omega}^{\mathbb{Z}_3}$ & 1 & $\omega$ & $\bar{\omega}$ \\
			\hline
			$\Gamma_{\bar{\omega}}^{\mathbb{Z}_3}$ & 1 & $\bar{\omega}$ & $\omega$ \\
			\hline
		\end{tabular}
		\hspace{1cm}
		\begin{tabular}{|c||cc|}
			\hline
			$\mathbb{Z}_2$ & $e$ & $x$ \\
			\hline
			\hline
			$\Gamma_1^{\mathbb{Z}_2}$ & 1 & 1 \\
			\hline
			$\Gamma_{-1}^{\mathbb{Z}_2}$ & 1 & -1 \\
			\hline
		\end{tabular}
		\caption{Irreducible representations of (a) $\mathbb{Z}_3$ and (b) $\mathbb{Z}_2$, i.e. the possible charge labels with flux $C_y$ and $C_x$.}
		\label{tab:irreps_of_Z3_Z2}
\end{table}

In summary, this model has 8 anyons, these are listed in Table~\ref{tab:D(S3)_anyons}. Anyon A is the vacuum since it has both trivial charge and flux. Anyons B and C are chargeons, they correspond respectively to the signed and two-dimensional irreps of $S_3$. Anyons $D$ and $F$ are fluxons since they correspond to the trivial irrep of their respective normalizers. Other anyons are dyons.

\begin{table}
	\centering
		\begin{tabular}{|c||ccccc|}
			\hline
			Label & $C_g$ & $\mathcal{N}_g$ & Irrep. & Q.dim. & Type \\
			\hline
			\hline
			A & $C_e$ & $S_3$ & $\Gamma^{S_3}_1$ & 1 & vacuum \\
			\hline
			B & $C_e$ & $S_3$ & $\Gamma^{S_3}_{-1}$ & 1 & chargeon \\
			\hline
			C & $C_e$ & $S_3$ & $\Gamma^{S_3}_2$ & 2 & chargeon \\
			\hline
			\hline
			D & $C_x$ & $\mathbb{Z}_2$ & $\Gamma^{\mathbb{Z}_2}_1$ & 3 & fluxon \\
			\hline
			E & $C_x$ & $\mathbb{Z}_2$ & $\Gamma^{\mathbb{Z}_2}_{-1}$ & 3 & dyon \\
			\hline
			\hline
			F & $C_y$ & $\mathbb{Z}_3$ & $\Gamma^{\mathbb{Z}_3}_1$ & 2 & fluxon \\
			\hline
			G & $C_y$ & $\mathbb{Z}_3$ & $\Gamma^{\mathbb{Z}_3}_{\omega}$ & 2 & dyon \\
			\hline
			H & $C_y$ & $\mathbb{Z}_3$ & $\Gamma^{\mathbb{Z}_3}_{\bar{\omega}}$ & 2 & dyon \\
			\hline
		\end{tabular}
		\caption{Anyons of $\mathcal{D}(S_3)$ with their charge and flux labels, quantum dimensions and type.}
		\label{tab:D(S3)_anyons}
\end{table}

At this point, we have defined anyons and described their braiding and fusion properties using a toy model of non-abelian Aharonov-Bohm effect. We recovered, using a physics point of view, the key properties of the Drinfeld double of a group. In particular, we worked out in detail the anyon types of $\mathcal{D}(S_3)$. However, in this toy model, anyons are fundamental particles. We will now describe the quantum double construction by Kitaev in which those anyons appear effectively as point-like excitations on a spin lattice.

\subsection{Kitaev's quantum double on a lattice}
\label{subsec:quantum_double}

A way to realize the non-abelian Aharonov-Bohm effect on a lattice is Kitaev's quantum double construction \cite{Kitaev03}. In this construction, charges reside on vertices and fluxes are on plaquettes of the lattice, however, fluxes and charges are not independent. A generic flux-charge composite particle (dyon) lives on a site: a combination of a vertex and a plaquette shown in Fig.~\ref{fig:site_def}.

This excitation structure is realized by first, assigning a Hilbert space to each edge of the lattice, the state of each edge can take any group element $z \in G$, then, defining a Hamiltonian that describes the interactions in this model. To introduce the Hamiltonian, let us define the following operators:
\bq
L^+_g \ket{z} &=& \ket{gz} ,\\
L^-_g \ket{z} &=& \ket{zg^{-1}} ,\\
T^+_h \ket{z} &=& \delta_{h,z} \ket{z} ,\\
T^-_h \ket{z} &=& \delta_{h^{-1},z} \ket{z} ,
\eq
where $L^+_g$ and $L^-_g$ are the matrices representing left- and right-multiplication operators, $T^+_h$ and $T^-_h$ are diagonal operators in the flux basis.

Then, we need to assign an orientation to the edges of the lattice. We use the convention shown in Fig.~\ref{fig:site_def} for a site, i.e., the union of a vertex and a plaquette.

 \begin{figure}
\begin{centering}
\includegraphics[width=0.2\textwidth]{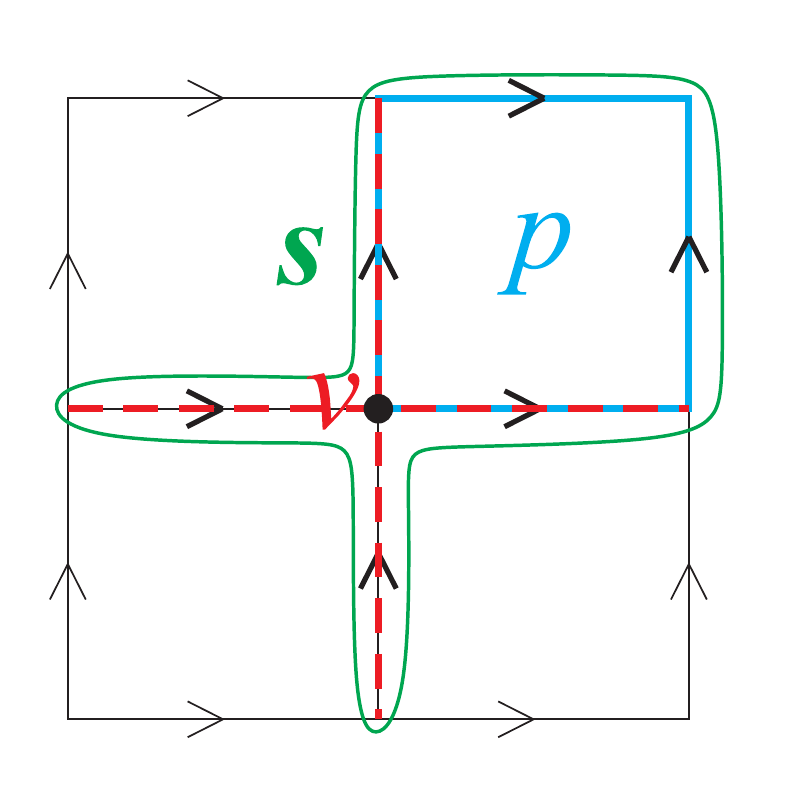}
\includegraphics[width=0.2\textwidth]{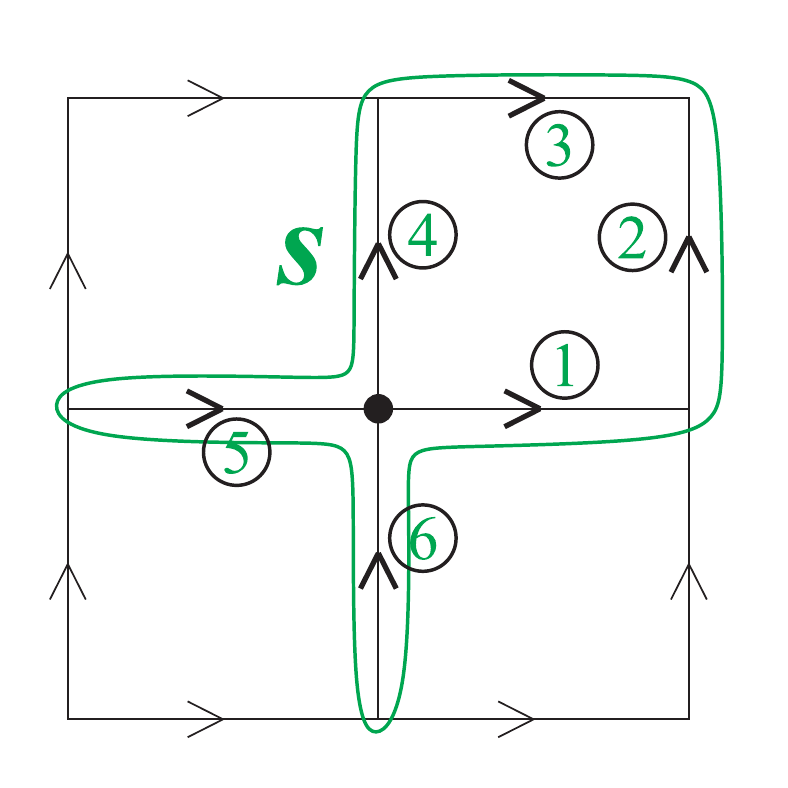}
\caption{(Color online) Our choice of orientation on the lattice, with (a) how a vertex $v$ and plaquette $p$ form a site $s$, and (b) the edge numbering we used to define the vertex and plaquette operators $\mathcal{A}^v_g$ and $B^p_h$ in Eqs.~\eqref{eq:plaq_op_def}-\eqref{eq:vertex_op_def}.}
\label{fig:site_def}
\end{centering}
\end{figure}

We now introduce two families of operators, following closely the original definition of~\cite{Kitaev03}.
\begin{defn}[Plaquette operators]
 For any element $h\in G$, we define an operator acting on the 4 spins around an oriented plaquette $p$
 \begin{equation} \label{eq:plaq_op_def}
  B^p_{h} = \sum_{h_1 h_2 h_3^{-1} h_4^{-1} = h} T^{+,1}_{h_1} \otimes T^{+,2}_{h_2} \otimes T^{-,3}_{h_3} \otimes T^{-,4}_{h_4}
 \end{equation}
 where the use of $T_h^{\pm}$ depends on the respective orientations of the plaquette and the edges. See Fig.~\ref{fig:site_def} for our orientation convention and the labeling of the spins.
 
 \end{defn}

 \begin{defn}[Vertex operators]
 For any element $g\in G$, we define a vertex operator, originally called star operators in~\cite{Kitaev03}, acting on the 4 spins around a vertex $v$
 \begin{equation} \label{eq:vertex_op_def}
  \mathcal{A}_g^v =  L^{+,1}_g \otimes L^{+,4}_g \otimes L^{-,5}_g \otimes L^{-,6}_g
 \end{equation}
where $L^+_g$ appears for outgoing edges and $L^-_g$ appears for incoming edges. See Fig.~\ref{fig:site_def} for our orientation convention and the labeling of the spins. 
\end{defn}
 
How these operators act on a vertex and on a plaquette is illustrated in Fig.~\ref{fig:A_B_acting_on_sites}. In order for individual $B^p_{h}$ to be properly defined even for a non-abelian group, we need to specify a starting vertex on the plaquette, then specify an orientation. Henceforth, we mark the starting vertex by a black dot in Figures~\ref{fig:site_def}-\ref{fig:A_B_acting_on_sites} and systematically orient the plaquettes in a counterclockwise manner. Whenever the orientation of an edge is opposite to the orientation of the plaquette, a plaquette operator $B_h$ acts on it with $T^-_h$, otherwise it acts with $T^+_h$. Similarly for the vertex operators: when the orientation of an edge points outwards from the vertex, $\mathcal{A}_g^v$ acts with $L^+_g$, otherwise with $L^-_g$ on that edge.

The projector unto the trivial flux at plaquette $p$ is simply the plaquette operator for the trivial element $B^p_{e}$. The projector unto trivial charge $A^v_{1}$ on vertex $v$ is defined as
\be
A^v_{1} = \sum_{g \in G} \mathcal{A}_g^v = \sum_{g \in G} L^{+,1}_g \otimes L^{+,4}_g \otimes L^{-,5}_g \otimes L^{-,6}_g,
\ee
 where the use of $L^+_g$ vs. $L^-_g$ again depends on the orientation of the edge with respect to the vertex. It is less trivial to see why this operator projects to the trivial charge, i.e., corresponds to the trivial representation. One explanation is that for any $g \in G$, we have $A^v_1 \mathcal{A}^v_g = A^v_1$. Thus, the image of $A^v_1$ is invariant under the action of any $\mathcal{A}^v_g$, which is characteristic of the trivial representation.

\begin{figure}
\begin{centering}
\includegraphics[width=0.4\textwidth]{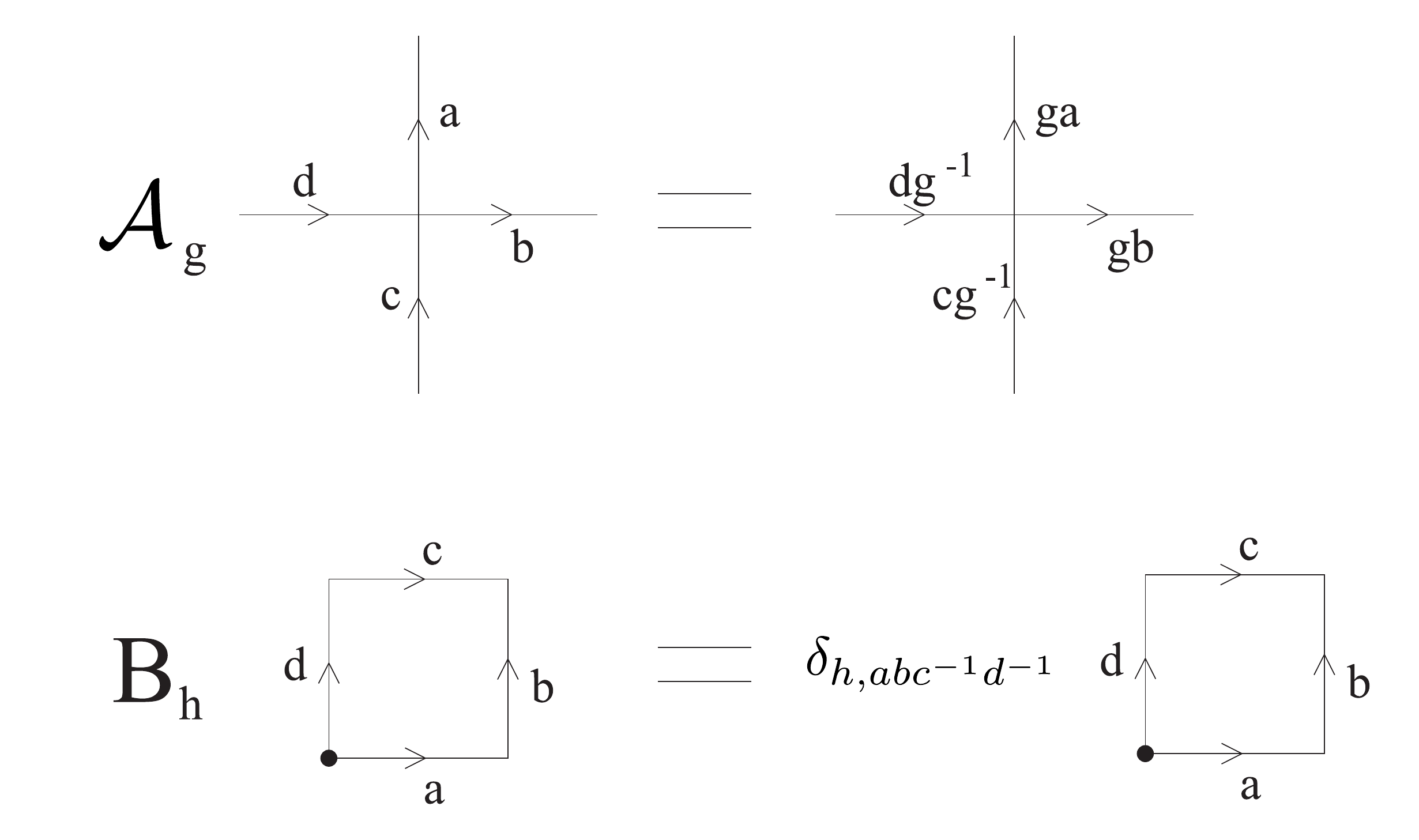}
\caption{The effect of the individual projector terms ($\mathcal{A}_g$ and $B_{h}$) on a vertex and on a plaquette, respectively.}
\label{fig:A_B_acting_on_sites}
\end{centering}
\end{figure}

Given vertex and plaquette operators, Kitaev introduced the following Hamiltonian in~\cite{Kitaev03}. 

\begin{defn}[Kitaev Hamiltonian]
The Kitaev Hamiltonian of a quantum double $\mathcal{D}(G)$ is
\be \label{eq:vacuum_Hamiltonian}
H = - \sum_v A^v_{1} - \sum_p B^p_{e} ,
\ee
\end{defn}

Please note that Hamiltonian \eqref{eq:vacuum_Hamiltonian} assigns an extensive energy of $-2$ for every site in the vacuum (ground state). Any vertex which does not carry the trivial charge receives an energy penalty. Similarly, any plaquette which does not exhibit a trivial flux receives an energy penalty.

\subsubsection*{Example: Toric code}

The simplest example of the above quantum double construction is the toric code \cite{Kitaev03}. This is the quantum double of $\mathbb{Z}_2$, thus the possible group elements on an edge can be: $\{ 0, 1 \}$, and all additions are understood modulo $2$: $0 \oplus 0 = 0$, $0 \oplus 1 = 1$, $1 \oplus 1 = 0$. The corresponding spin states $\ket{0}$ and $\ket{1}$ are the usual computational basis for qubits.

In $\mathbb{Z}_2$, the left- and right-multiplication operators are the same: $L_0^+ = L_0^- = \id$ and $L_1^+ = L_1^- = X$, where $X$ is the Pauli $X$ operator. The diagonal operators are: $T_0^+ = T_0^- = (\id+Z)/2$ and $T_1^+ = T_1^- = (\id-Z)/2$, with $Z$ being the Pauli $Z$ operator.

The operators projecting unto trivial flux and trivial charge are (omitting the tensor product sign for simplicity, please refer to Fig.~\ref{fig:site_def} for the labeling convention): $A^v = \id^{1} \id^{4}  \id^{5}  \id^{6} + X^{1}  X^{4}  X^{5}  X^{6}$ and $B^p = \id^{1}  \id^{2}  \id^{3}  \id^{4} + Z^{1}  Z^{2}  Z^{3}  Z^{4}$, thus the Hamiltonian is
\bq
H = - \sum_v (\id^{1}  \id^{4}  \id^{5}  \id^{6} + X^{1}  X^{4}  X^{5}  X^{6}) \nonumber \\ 
- \sum_p (\id^{1}  \id^{2}  \id^{3}  \id^{4} + Z^{1}  Z^{2}  Z^{3}  Z^{4}) ,
\eq
or in its widely known form, after redefining the ground state energy:
\be \label{eq:toric_code_H}
H = - \sum_v X^{1}  X^{4}  X^{5}  X^{6} - \sum_p Z^{1}  Z^{2}  Z^{3}  Z^{4} .
\ee

Similar to the general quantum double Hamiltonian of Eq.~\eqref{eq:vacuum_Hamiltonian}, this Hamiltonian assign an extensive energy of $-2$ to all sites in the vacuum state, and an energy penalty for vertices with a non-trivial charge, and for plaquettes with a non-trivial flux. Having a non-trivial charge or flux at a certain vertex/plaquette is frequently referred to as "violating" that vertex/plaquette term in the Hamiltonian; the eigenvalue of each 4-body term is either $+1$ (no charge/flux) or $-1$ (charge/flux excitation). A violated vertex term corresponds to a charge excitation ($e$), while a violated plaquette term means a flux excitation is living on that plaquette ($m$). The four possible states of a site are thus:
\bq \nonumber
&& \id \textrm{  (vacuum)} \\
\nonumber
&& e \textrm{  (charge)} \\
\nonumber
&& m \textrm{  (flux)} \\
\nonumber
&& \epsilon = e \otimes m \textrm{  (dyon)}
\eq

Further analyzing the Hamiltonian of Eq.~\eqref{eq:toric_code_H}, we can see two main features of the model. First, the charges and fluxes have decoupled from each other, which is typical of abelian quantum doubles since any anyon type is the juxtaposition of a charge and a flux. Indeed, the only dyon is $\epsilon = e \otimes m$, which is the simple combination of the non-trivial charge ($e$) and the non-trivial flux ($m$), and has no additional emergent properties.

Second, there is only one kind of excitation of either type (1 electric charge ($e$) and 1 magnetic flux ($m$)) in this model. Therefore, the Hamiltonian, which contains only two projectors can distinguish the four types of anyons: vacuum $\id$, electric chargeon $e$, magnetic fluxon $m$ and the dyon $\epsilon=e\otimes m$. We will refer to this as the Kitaev Hamiltonian having 1-to-1 correspondence between energy eigenstates and anyon type, in the case of the toric code. 

We will now generalize the Kitaev Hamiltonian for non-abelian models by introducing local terms which project unto the different possible charges and the different possible fluxes in Sec.~\ref{sec:tunable_Hamiltonian}. We will argue that this is a natural generalization of Kitaev's Hamiltonian. However, our generalization will highlight that anyon type and energy eigenspaces are \emph{not} in one-to-one correspondence for \emph{non-abelian} quantum doubles in Sec.~\ref{sec:Hilbert-splitting}. This peculiar feature was already present in Kitaev's original construction. This discrepancy between energy eigenspaces and anyons stems from the presence of \emph{local degrees of freedom} that are not topological and arise from the lattice, in the case of non-abelian models. Those will be explored in Sec.~\ref{sec:local-dof}.

\section{Refined quantum double Hamiltonian for arbitrary group}
\label{sec:tunable_Hamiltonian}

We have seen in the previous section that the Kitaev Hamiltonian given by Eq. ~\eqref{eq:vacuum_Hamiltonian} assigns an energy penalty to any nontrivial charge and flux. However, it does not distinguish among two distinct nontrivial charges or fluxes. It is then natural to wonder whether one can enrich the model by introducing new local terms which will introduce such a distinction? And if yes, how will that change the excitation structure of the theory?

In this section, we introduce in Sec.~\ref{sec:tunable-construction} a Hamiltonian that splits up the energies of different excitations for any quantum double, and then, in Sec.~\ref{subsec:D(S_3)_Hamiltonian}, we work out explicitly the corresponding Hamiltonian for the quantum double of $\mathcal{D}(S_3)$.

\subsection{Refined quantum double construction}\label{sec:tunable-construction}

Our aim in this section is to introduce projectors unto different nontrivial charges and fluxes. In Sec.~\ref{subsec:quantum_double} we have already given the form of the trivial flux projector $B^p_e$ and trivial charge projector $A^v_1$. Even though these vacuum projectors are independent of one another, and are both 4-body operators, it is not trivial that one could write independent 4-body charge- and flux-projectors. This is because, unlike in the case of abelian quantum doubles, the charge and flux of a site are tied to one another when considering dyons, i.e. the charge is defined as an irreducible representation of the normalizer of the flux conjugacy class.

This section is organized as follows. We will first comment on the reasons we insist on defining a Hamiltonian whose terms are four-local in Sec.~\ref{sec:locality}. We then outline our construction by recalling the definition of flux projectors and introducing charge projectors in Sec.~\ref{sec:flux-charge-projectors}. This allows us to  define our family of refined Hamiltonians in Sec.~\ref{sec:definition-Hamiltonian}. Namely, Theorem~\ref{thm:main-thm} is the novel family of Hamiltonians introduced by our work. In the following sections, we sketch the proof of Theorem~\ref{thm:main-thm} which relies on proving that the charge projectors are indeed an orthonormal family of projectors in Sec.~\ref{sec:math-proof} and then proving that they commute with the flux projectors in Sec.~\ref{subsubsec:Commutation}. Formal mathematical proofs are given in the appendix.

\subsubsection{Locality of the Hamiltonian}\label{sec:locality}

A simple route to assign different masses to each anyon type would be to introduce a 6-local Hamiltonian. Indeed, each anyon lives on a site comprised of 6 spins. We can thus achieve an energy spectrum in one-to-one correspondence with anyon types by introducing 6-local projectors acting on sites, projecting unto the different anyon species defined by the combination of a flux and a charge label, $P^s_{(C_h, \Gamma^{\cN_h})}$ where $\Gamma^{\cN_h}$ labels irreps of the normalizer of each elements of $C_h$ (which are isomorphic). 

More precisely, we can define a massive 6-local Hamiltonian 
\be \label{eq:massiveH_6body}
H = \sum_s \sum_{C_h} \sum_{\mathrm{irrep} \: \Gamma^{\cN_h}} \alpha_{(C_h,\Gamma^{\cN_h})}  P^s_{(C_h,\Gamma^{\cN_h})} ,
\ee
where the projectors have been defined in~\cite{BM08}
\begin{equation} \label{eq:6-body_projector}
 P^{s=(p,v)}_{(C_h, \Gamma^{\cN_h})} = \sum_{g\in C_h} \sum_{g'\in \cN_g} \frac{d_{\Gamma^{\cN_{g}}}}{|\cN_g|} \chi_{\Gamma^{\cN_{g}}}(g') \cA^v_{g'} B^p_g ,
\end{equation}
where  $d_{\Gamma^{\cN_{g}}}$ is the dimension of irrep $\Gamma^{\cN_{g}}$ (an irrep of the normalizer $\cN_{g}$) and $\chi_{\Gamma^{\cN_{g}}}(g')=\mbox{Tr}\left[\Gamma^{\cN_{g}}(g')\right]$ is the character of group element $g'$ in irrep $\Gamma^{\cN_{g}}$.

Thus, each coupling constants $\alpha_{(C_h, \Gamma^{\cN_h})}$ corresponds to the mass of an anyon type and they can be tuned independently. While this Hamiltonian offers the greatest flexibility for the energy spectrum, we will follow a different construction for three main reasons: 
\begin{enumerate}
 \item First, we aim to have the non-abelian massive Hamiltonian be as close in form to the original Kitaev construction as possible, and we can achieve this without making our Hamiltonian more non-local. 
  \item The second reason for 4-local terms in the Hamiltonian is that we would like our Hamiltonian to remain local since it appears to be physically more realistic. And even though it might be possible to further decrease the degree of locality to 3-local commuting terms for non-abelian models~\cite{AE11}, we have arguments that indicate that 2-local commuting Hamiltonians cannot be topological in 2D~\cite{BV05}. Indeed, the 4-local toric code Hamiltonian can be recovered effectively in the right parameter regime of a nearest-neighbor 2-local, yet frustrated, Hamiltonian on a honeycomb lattice. More generally, there is a procedure to turn a 4-local quantum double Hamiltonian for arbitrary group into a frustrated 2-local Hamiltonian thanks to a so-called 'gadget construction'~\cite{BFB+11}. 

  \item Third, writing 4-local terms will allow us to underline the discrepancy between anyon types and energy eigenspaces arising from the emergence of local degrees of freedom.
\end{enumerate}

\subsubsection{Flux and charge projectors}\label{sec:flux-charge-projectors}

The operators acting on a plaquette and projecting to a specific flux/specific group element have already been introduced in Eq.~\eqref{eq:plaq_op_def}. However, as pointed out earlier, a group element does not provide a gauge-invariant labeling of fluxons. Thus, we are lead to define a flux projector by considering a conjugacy class $C_h$
\begin{defn}[Flux projectors]
 The flux projector associated to a conjugacy class $C_h$ of a group $G$ is
 \bq \label{eq:flux_projector}
B_{C_h} &=& \sum_{h' \in C_h} B_{h'}.
\eq
\end{defn}

We now introduce a family of charge projectors which generalizes the projector unto the trivial irrep introduced by Kitaev in~\cite{Kitaev03}. These \emph{charge projectors} are cornerstones of our refined quantum double construction.

\begin{defn}[Charge projectors]
The charge projector associated to an irreducible representation $\Gamma$ of the group $G$ is
\be \label{eq:charge_projector}
A_\Gamma = \frac{d_\Gamma}{|G|} \sum_{g \in G} \chi_\Gamma (g) \mathcal{A}_g ,
\ee
where  $d_\Gamma$ is the dimension of irrep $\Gamma$ and $\chi_{\Gamma}(g)=\mbox{Tr}\left[\Gamma(g)\right]$ is the character of group element $g$ in irrep $\Gamma$.
\end{defn}

These charge projectors can be thought of as a special case of the 6-body projector introduced in \cite{BM08} given by Eq.~\eqref{eq:6-body_projector}, with $C_h = C_e$. Using only the set of projectors defined in \eqref{eq:flux_projector} and in \eqref{eq:charge_projector} will lead to a different partitioning of the Hilbert space than by using the 6-body projectors of \eqref{eq:6-body_projector}. This will be further explored in Secs.~\ref{sec:Hilbert-splitting}-\ref{sec:local-dof}.

We defer a sketch of the proof that those operators are indeed orthogonal projectors to Sec. \ref{sec:math-proof}. One can check that for abelian groups, our charge projectors reduce to those introduced in Refs.~\cite{BAP14,KLT16}. Our charge projectors are reminiscent of similar objects introduced in~\cite{BAC09,WBI+14} using the representations themselves rather than the characters in the specific case of $\mathcal{D}(S_3)$.

\subsubsection{Definition of the refined quantum double Hamiltonian}\label{sec:definition-Hamiltonian}

Having defined flux projectors by Eq. \eqref{eq:flux_projector} and charge projectors by Eq. \eqref{eq:charge_projector}, we are now in a position to introduce our novel family of commuting Hamiltonians which assign different mass to different anyons.
\begin{thm} \label{thm:main-thm}
 The following family of topological Hamiltonians have commuting projector 4-local terms
 
 \begin{equation} \label{eq:massiveH_4body}
  H = \sum_v \sum_{\textrm{irrep } \Gamma^G} \alpha_{\Gamma^G} A^v_{\Gamma^G} + \sum_p \sum_{C_g \subset G} \beta_{C_g} B^p_{C_g}
 \end{equation}

\end{thm}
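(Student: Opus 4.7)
The proof breaks naturally into three verifications: each $A^v_{\Gamma^G}$ and $B^p_{C_g}$ is $(i)$ 4-local, $(ii)$ a projector, and $(iii)$ every pair of terms commutes. Locality is immediate: by definition, $A^v_{\Gamma^G}$ is a linear combination of the $\mathcal{A}^v_g$, each of which acts on the four edges around $v$; likewise $B^p_{C_g}$ is a linear combination of the $B^p_h$, which act on the four edges around $p$. So the substantive content lies in showing each object is a projector and that all commutators vanish.

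For the projector property of $A^v_\Gamma$, the plan is to first verify that $g\mapsto \mathcal{A}^v_g$ is a representation of $G$ on the four-spin Hilbert space around $v$. Checking that $L^+_g L^+_h=L^+_{gh}$ and $L^-_g L^-_h=L^-_{gh}$ (and that $L^+$ on one edge commutes with $L^-$ on another since they act on distinct tensor factors) shows $\mathcal{A}^v_g \mathcal{A}^v_h = \mathcal{A}^v_{gh}$. Then $A^v_\Gamma$ is the standard central character projector $\tfrac{d_\Gamma}{|G|}\sum_g \chi_\Gamma(g)\,\mathcal{A}^v_g$, and the usual character orthogonality relations from finite group representation theory give $A^v_\Gamma A^v_{\Gamma'}=\delta_{\Gamma\Gamma'}A^v_\Gamma$, so each $A^v_\Gamma$ is a projector and distinct irreps give orthogonal projectors. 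For $B^p_{C_g}$, the individual $B^p_h$ are already mutually orthogonal projectors onto distinct values of the oriented plaquette holonomy (this follows from the definition \eqref{eq:plaq_op_def} together with $T^\pm_h T^\pm_{h'}=\delta_{h,h'}T^\pm_h$), so $B^p_{C_g}=\sum_{h\in C_g}B^p_h$ is a projector onto the direct sum of those eigenspaces.

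Commutativity splits into cases by the geometric relation of the two supports. Charge projectors at different vertices $v\neq v'$ share at most one edge, on which one factor is some $L^+_g$ and the other some $L^-_h$; a direct check $L^+_g L^-_h|z\rangle=|gzh^{-1}\rangle=L^-_h L^+_g|z\rangle$ shows these elementary pieces commute, so $\mathcal{A}^v_g$ and $\mathcal{A}^{v'}_{h}$ commute, and hence so do any linear combinations. Charge projectors at the same vertex commute because they are polynomials in the representation $\mathcal{A}^v_\cdot$, and character orthogonality actually makes them orthogonal idempotents as noted above. Flux projectors at different plaquettes share at most one edge, on which both act diagonally (with a $T^+$ or $T^-$), and diagonal operators commute; flux projectors at the same plaquette are orthogonal as sums of disjoint eigenprojectors.

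The delicate case, which I expect to be the main obstacle, is the commutation of a charge projector $A^v_\Gamma$ with a flux projector $B^p_{C_g}$ when the vertex $v$ lies on the boundary of the plaquette $p$, since then two edges are shared and the $L^\pm$ factors do not commute with the $T^\pm$ factors on the same edge. The plan here is to establish the local identity
\begin{equation}
\mathcal{A}^v_g \, B^p_h \;=\; B^p_{ghg^{-1}}\,\mathcal{A}^v_g,
\end{equation}
which says that pushing an $\mathcal{A}^v_g$ through a $B^p_h$ conjugates the plaquette holonomy by $g$. Proving this requires a careful bookkeeping of the orientation conventions of Fig.~\ref{fig:site_def} and the relations $L^\pm_g T^{\pm}_h (L^\pm_g)^{-1}=T^\pm_{ghg^{\pm 1}}$ on each shared edge; the contributions from the two shared edges combine precisely so that the ordered product $h_1h_2h_3^{-1}h_4^{-1}$ defining the plaquette holonomy gets conjugated by $g$ as a whole. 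Summing over $h\in C_g$ on both sides, conjugation merely permutes the conjugacy class, so $\mathcal{A}^v_g B^p_{C_g}=B^p_{C_g}\mathcal{A}^v_g$; multiplying by $\tfrac{d_\Gamma}{|G|}\chi_\Gamma(g)$ and summing over $g$ then yields $A^v_\Gamma B^p_{C_g}=B^p_{C_g}A^v_\Gamma$, completing the proof. The formal derivation of the bookkeeping identity is the natural content to relegate to the appendix.
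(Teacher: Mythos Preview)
Your proposal is correct and its overall architecture matches the paper: the flux--charge commutation is obtained from the conjugation relation $\mathcal{A}^v_g B^p_h = B^p_{ghg^{-1}}\mathcal{A}^v_g$ (the paper's Lemma~\ref{lem:flux-permutation}) followed by summing over the conjugacy class. Two small points your appendix bookkeeping will need to correct: the per-edge identities are $L^+_g T^+_h (L^+_g)^{-1}=T^+_{gh}$ and $L^-_g T^+_h (L^-_g)^{-1}=T^+_{hg^{-1}}$, not conjugation on a single edge; and the conjugation of the full holonomy occurs only when $v$ is the basepoint of $p$---for the other three corners the two edge contributions cancel and $\mathcal{A}^v_g$ already commutes with $B^p_h$ before summing, which is precisely the paper's Eq.~\eqref{eq:commutation-lemma2}. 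Neither affects your conclusion, since summing over a conjugacy class absorbs any conjugation (or none).

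Where you genuinely diverge is in proving $A_\Gamma A_\Lambda=\delta_{\Gamma\Lambda}A_\Gamma$. You recognise $A^v_\Gamma$ as the standard isotypic projector of the representation $g\mapsto\mathcal{A}^v_g$ and appeal directly to the character-convolution identity $\sum_g\chi_\Gamma(g)\chi_\Lambda(g^{-1}h)=\delta_{\Gamma\Lambda}\tfrac{|G|}{d_\Gamma}\chi_\Lambda(h)$, which is the textbook route. The paper instead introduces a basis-independent operator form of the Great Orthogonality Theorem, $\sum_g\Gamma(g)\otimes\Lambda(g^{-1})=\tfrac{|G|}{d_\Gamma}\delta_{\Gamma\Lambda}S$ with $S$ the swap operator (Lemma~\ref{lem:GOT-swap}), and extracts the convolution identity from that. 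Your argument is shorter and more standard; the paper's detour buys an operator identity the authors regard as independently useful.
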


This family of commuting Hamiltonians is a central contribution of the paper. They are a new family of topological spin Hamiltonians made out of commuting projectors, similar to well-known families of topological models such as the Levin-Wen string-net models~\cite{LW05} and the Turaev-Viro codes~\cite{KKR10}. Compared to the Kitaev original quantum double Hamiltonians, they present the new feature of having tunable coupling constants that allow to discriminate among non-trivial charges and fluxes while preserving the useful mathematical properties of quantum doubles. In particular, the coupling constants can be chosen so that the groundspace is identical to the Kitaev Hamiltonian. Note that, for simplicity, we assumed the coupling coefficients to be independent of the vertices and the plaquettes, although they need not be.

We will now prove in Sec.~\ref{sec:math-proof} that the operators defined by Eq. \eqref{eq:charge_projector} are indeed projectors and then in Sec.~\ref{subsubsec:Commutation} that the charge and the flux projectors are pairwise commuting.

\subsubsection{Orthonormality of the charge projectors}\label{sec:math-proof}

\begin{thm}[Orthogonality of charge projectors]\label{thm:charge-projector}
The operators defined by Eq. \eqref{eq:charge_projector} are orthonormal projectors
\begin{equation}
A_{\Gamma}A_{\Lambda}=\delta_{\Gamma\Lambda}A_{\Gamma}
\end{equation}
\end{thm}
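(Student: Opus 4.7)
The plan is to reduce the claim to the second orthogonality relation for characters, after first establishing that the vertex operators $\{\mathcal{A}_g\}_{g\in G}$ form a (unitary) representation of $G$ on the four qudits around the vertex. This is the conceptual content of the theorem: $A_\Gamma$ is, up to a normalization, the standard group-algebra idempotent associated to the irrep $\Gamma$, and idempotency for such objects is a classical result of representation theory.

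First, I would verify that $\mathcal{A}_g\mathcal{A}_h=\mathcal{A}_{gh}$. Using the definitions of $L^{\pm}_g$, one checks $L^+_gL^+_h=L^+_{gh}$ (from $ghz=g(hz)$) and $L^-_gL^-_h=L^-_{gh}$ (from $z(gh)^{-1}=(zh^{-1})g^{-1}$), and these act on disjoint tensor factors, so the product rule lifts to $\mathcal{A}_g$ regardless of the orientation convention on the four edges incident to $v$. Next I would expand
\begin{equation}
A_\Gamma A_\Lambda=\frac{d_\Gamma d_\Lambda}{|G|^2}\sum_{g,h\in G}\chi_\Gamma(g)\chi_\Lambda(h)\,\mathcal{A}_{gh},
\end{equation}
and change variables to $k=gh$, yielding
\begin{equation}
A_\Gamma A_\Lambda=\frac{d_\Gamma d_\Lambda}{|G|^2}\sum_{k\in G}\mathcal{A}_k\sum_{g\in G}\chi_\Gamma(g)\chi_\Lambda(g^{-1}k).
\end{equation}

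The heart of the argument is then to evaluate the inner sum using Schur orthogonality. Writing $\chi_\Lambda(g^{-1}k)=\sum_{b,c}\Lambda(g^{-1})_{bc}\Lambda(k)_{cb}$ and $\chi_\Gamma(g)=\sum_a\Gamma(g)_{aa}$, and applying
\begin{equation}
\sum_{g\in G}\Gamma(g)_{ij}\Lambda(g^{-1})_{kl}=\frac{|G|}{d_\Gamma}\delta_{\Gamma\Lambda}\delta_{il}\delta_{jk},
\end{equation}
one obtains $\sum_g\chi_\Gamma(g)\chi_\Lambda(g^{-1}k)=\tfrac{|G|}{d_\Gamma}\delta_{\Gamma\Lambda}\chi_\Lambda(k)$. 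Substituting back gives $A_\Gamma A_\Lambda=\delta_{\Gamma\Lambda}\,\tfrac{d_\Lambda}{|G|}\sum_k\chi_\Lambda(k)\mathcal{A}_k=\delta_{\Gamma\Lambda}A_\Lambda$, which is exactly the claim. Hermiticity (hence that each $A_\Gamma$ is an honest orthogonal projector rather than just an idempotent) follows from $\chi_\Gamma(g^{-1})=\overline{\chi_\Gamma(g)}$ together with $\mathcal{A}_g^{\dagger}=\mathcal{A}_{g^{-1}}$, the latter because $L^{\pm}_g$ are unitary with $(L^{\pm}_g)^{\dagger}=L^{\pm}_{g^{-1}}$.

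I expect the only place requiring some care, rather than pure bookkeeping, is the first step: checking that the mixed $L^+/L^-$ pattern dictated by edge orientations still assembles into a genuine group homomorphism $g\mapsto\mathcal{A}_g$. Once this is in place, the rest is a direct appeal to Schur orthogonality; the change of variable $h\mapsto g^{-1}k$ is the only non-obvious manipulation, and it is forced by the desire to land on an expression of the form $\sum_k \chi_\Lambda(k)\mathcal{A}_k$.
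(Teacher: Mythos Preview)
Your proof is correct and follows essentially the same route as the paper: expand $A_\Gamma A_\Lambda$, use $\mathcal{A}_g\mathcal{A}_h=\mathcal{A}_{gh}$, change variables, and reduce everything to evaluating the convolution $\sum_{g}\chi_\Gamma(g)\chi_\Lambda(g^{-1}k)$ via Schur orthogonality. The only difference is packaging: the paper first isolates a basis-independent restatement of the Great Orthogonality Theorem, namely $\sum_{g}\Gamma(g)\otimes\Lambda(g^{-1})=\tfrac{|G|}{d_\Gamma}\delta_{\Gamma\Lambda}S$ with $S$ the swap operator, and then takes a trace against $\mathbb{I}\otimes\Lambda(k)$ to obtain the same identity $\sum_{g}\chi_\Gamma(g)\chi_\Lambda(g^{-1}k)=\tfrac{|G|}{d_\Gamma}\delta_{\Gamma\Lambda}\chi_\Lambda(k)$ that you derive directly from the matrix-element form of the GOT. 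Your approach is a bit more economical; the paper's buys a standalone operator identity of independent interest. Your explicit verification that $g\mapsto\mathcal{A}_g$ is a homomorphism and your remark on Hermiticity are left implicit in the paper.
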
 

\begin{proof}
This is a non-trivial consequence of the Great Orthogonality Theorem (GOT), see Fact~\ref{fact:GOT}. To prove this theorem, we will first prove a basis-independent statement of the GOT (Lemma \ref{lem:GOT-swap}). The full proof is deferred to the appendix in Sec.~\ref{subsec:proof-orthonormality}.
\end{proof}

\begin{fact}[Great Orthogonality Theorem]\label{fact:GOT}

\begin{equation}
\sum_{g\in G}\left(\Gamma(g)\right)_{ij}\overline{\left(\Lambda(g)\right)_{i'j'}}=\frac{|G|}{d_{\Gamma}}\delta_{\Gamma\Lambda}\delta_{ii'}\delta_{jj'}
\end{equation}
where $\overline{a}$ is the complex conjugate of $a\in\mathbb{C}$. 
\end{fact}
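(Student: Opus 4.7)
The plan is to prove Fact~\ref{fact:GOT} via the classical Schur's lemma argument, reducing the component-wise orthogonality statement to a statement about group-averaged intertwiners. First, I would reduce to the case of \emph{unitary} representations. For any finite group $G$ and any representation $\Lambda$, one can average an arbitrary inner product over the group to produce a $G$-invariant inner product; in this new inner product $\Lambda(g)^{-1} = \Lambda(g)^{\dagger}$, and so $(\Lambda(g^{-1}))_{j'i'} = \overline{(\Lambda(g))_{i'j'}}$. This allows one to rewrite the left-hand side of the theorem as
\begin{equation}
S_{iji'j'}^{\Gamma\Lambda} = \sum_{g \in G} (\Gamma(g))_{ij}\, (\Lambda(g^{-1}))_{j'i'},
\end{equation}
which is the $(i,i')$-entry of the operator $T = \sum_{g} \Gamma(g)\, E_{jj'}\, \Lambda(g^{-1})$, where $E_{jj'}$ is the matrix unit with a $1$ in position $(j,j')$ and zeros elsewhere, acting from the carrier space of $\Lambda$ to that of $\Gamma$.

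Second, I would show that $T$ is an intertwiner: a direct reindexing $g \mapsto h^{-1} g$ gives $\Gamma(h)\, T = T\, \Lambda(h)$ for all $h \in G$. Schur's lemma then splits into two cases. If $\Gamma \not\cong \Lambda$, the intertwiner $T$ must vanish, giving $S_{iji'j'}^{\Gamma\Lambda} = 0$ and contributing the $\delta_{\Gamma\Lambda}$ factor. If $\Gamma = \Lambda$, Schur's lemma forces $T = c_{jj'} \id$ for some scalar depending on the chosen matrix unit. Taking the trace of both sides of the definition of $T$ and using cyclicity yields
\begin{equation}
d_{\Gamma}\, c_{jj'} = \mathrm{Tr}(T) = \sum_{g} \mathrm{Tr}\!\left(E_{jj'}\,\Gamma(g^{-1})\Gamma(g)\right) = |G|\,\delta_{jj'},
\end{equation}
so $c_{jj'} = (|G|/d_{\Gamma})\,\delta_{jj'}$. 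Reading off the $(i,i')$-entry then produces the full formula, including the $\delta_{ii'}$ factor.

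Third, I would address the one technical subtlety: the statement in Fact~\ref{fact:GOT} uses complex conjugation of matrix elements of $\Lambda$, not the inverse. These agree only for unitary $\Lambda$, so the reduction to unitary representations above is essential. A cleaner way to package this is to prove the identity first for unitary choices of $\Gamma$ and $\Lambda$ (which is possible by Maschke/unitarizability for finite groups), and then note that the resulting identity $A_{\Gamma}A_{\Lambda} = \delta_{\Gamma\Lambda}A_{\Gamma}$ used in Theorem~\ref{thm:charge-projector} is basis-independent because the characters $\chi_{\Gamma}(g)$ are basis-independent, so it suffices to check Fact~\ref{fact:GOT} in any one (unitary) basis per irrep.

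The main obstacle is not the algebra but the unitarity reduction: one must be careful to treat the $\overline{(\Lambda(g))_{i'j'}}$ on the left-hand side so that Schur's lemma applies cleanly. Beyond that, everything follows from Schur plus a single trace computation, and the matrix-unit trick $M = E_{jj'}$ is what separates the component-wise statement from the more abstract intertwiner statement.
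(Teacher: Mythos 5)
Your proof is correct, but note that the paper does not prove Fact~\ref{fact:GOT} at all: it is stated as a known result with a citation to Serre, and the paper's own work begins only downstream, at the basis-independent restatement (Lemma~\ref{lem:GOT-swap}) and its use in Theorem~\ref{thm:charge-projector}. What you have written is the standard Schur's-lemma argument (form the averaged intertwiner $T=\sum_g \Gamma(g)E_{jj'}\Lambda(g^{-1})$, kill it when $\Gamma\not\cong\Lambda$, scalar-ize it when $\Gamma=\Lambda$, fix the scalar by a trace), and each step checks out, including the trace computation $\mathrm{Tr}(T)=|G|\delta_{jj'}=d_\Gamma c_{jj'}$. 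Your attention to the unitarity reduction is well placed and in fact slightly more careful than the paper, which also silently assumes $\overline{(\Lambda(g))_{\ell k}}=(\Lambda(g^{-1}))_{k\ell}$ in the proof of Lemma~\ref{lem:GOT-swap}; your observation that only characters enter the charge projectors, so a unitary basis may be chosen per irrep without loss of generality, cleanly justifies that convention.
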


The Great Orthogonality Theorem is a strong result in representation theory, usually stated at the level of matrix elements of two representations $\Gamma$ and $\Lambda$ of a group $G$~\cite{Serre12}. In the proof of Theorem~\ref{thm:charge-projector} we utilize the following basis-independent version of the Great Orthogonality Theorem. To our knowledge, this operator restatement of the GOT is novel and could prove to be a useful tool in operator theory.

\begin{lem}[Basis-independent GOT]\label{lem:GOT-swap}
\begin{equation}
\sum_{g\in G}\Gamma(g)\otimes\Lambda(g^{-1})=\frac{\left|G\right|}{d_{\Gamma}}\delta_{\Gamma\Lambda}S
\end{equation}
where $S$ is the swap operator, i.e., $S:\mathbb{C}^{d}\times\mathbb{C}^{d}\to\mathbb{C}^{d}\times\mathbb{C}^{d}$
is defined by $S\left(|i\rangle\otimes|j\rangle\right)=|j\rangle\otimes|i\rangle$.\end{lem}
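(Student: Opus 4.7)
The plan is to prove the identity by computing matrix elements of both sides in a fixed product basis and invoking the Great Orthogonality Theorem as stated (Fact~\ref{fact:GOT}). The only non-trivial ingredient beyond GOT is a careful rewriting of $\Lambda(g^{-1})$ in terms of the complex conjugate of $\Lambda(g)$, which forces us to work with unitary representations.

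First I would, without loss of generality, assume both $\Gamma$ and $\Lambda$ are unitary, using the standard fact that every finite-group representation is equivalent to a unitary one; equivalent representations are irreducible simultaneously and share dimension and character, so this does not affect either side. Unitarity then gives $\Lambda(g^{-1}) = \Lambda(g)^{\dagger}$, which at the level of components reads $\left(\Lambda(g^{-1})\right)_{kl} = \overline{\left(\Lambda(g)\right)_{lk}}$. Picking orthonormal bases $\{|i\rangle\}$ and $\{|k\rangle\}$ of the modules of $\Gamma$ and $\Lambda$ respectively, I would compute the matrix elements of the left-hand side in the product basis:
\begin{equation}
\langle i,k|\sum_{g\in G}\Gamma(g)\otimes\Lambda(g^{-1})|j,l\rangle = \sum_{g\in G}\left(\Gamma(g)\right)_{ij}\overline{\left(\Lambda(g)\right)_{lk}}.
\end{equation}
Applying Fact~\ref{fact:GOT} with the GOT indices $(i,j,i',j') \leftrightarrow (i,j,l,k)$ immediately yields
\begin{equation}
\langle i,k|\sum_{g\in G}\Gamma(g)\otimes\Lambda(g^{-1})|j,l\rangle = \frac{|G|}{d_{\Gamma}}\delta_{\Gamma\Lambda}\,\delta_{il}\delta_{jk}.
\end{equation}

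Finally I would recognize the delta-pattern on the right as precisely the matrix elements of the swap operator: by the definition $S(|i\rangle\otimes|k\rangle)=|k\rangle\otimes|i\rangle$, one has $\langle i,k|S|j,l\rangle = \delta_{il}\delta_{jk}$. (Note that the factor $\delta_{\Gamma\Lambda}$ ensures we only need $S$ to make sense when the two modules coincide, so no identification between inequivalent representation spaces is required.) Since the matrix elements of $\sum_g \Gamma(g)\otimes\Lambda(g^{-1})$ and $\frac{|G|}{d_{\Gamma}}\delta_{\Gamma\Lambda}S$ agree in a basis, the two operators are equal, proving the lemma.

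The only real obstacle here is index bookkeeping: a careless ordering of indices in the tensor-product basis will produce $\delta_{ij}\delta_{kl}$ (the identity) instead of $\delta_{il}\delta_{jk}$ (the swap). The swap pattern arises precisely because taking the inverse $g\mapsto g^{-1}$ combined with unitarity transposes the indices of $\Lambda(g)$, shifting the second $\delta$ factor of the GOT from ``diagonal'' to ``crossed.'' Beyond this combinatorial care, the proof is a direct computation and no deeper representation-theoretic input beyond the GOT and unitarizability of finite-group representations is needed.
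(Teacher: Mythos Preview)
Your proof is correct and follows essentially the same approach as the paper: both expand the left-hand side in matrix elements, use unitarity to rewrite $\Lambda(g^{-1})$ as the conjugate-transpose of $\Lambda(g)$, apply the Great Orthogonality Theorem, and then recognize the resulting Kronecker pattern $\delta_{i\ell}\delta_{jk}$ as the swap operator. Your version is arguably a bit more careful in making the unitarity assumption explicit, whereas the paper uses it silently in the step $(\Lambda(g^{-1}))_{k\ell}=\overline{(\Lambda(g))_{\ell k}}$.
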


\begin{proof}
 The proof is deferred to the appendix in Sec.~\ref{subsec:proof-lemma-GOT}.
\end{proof}

\subsubsection{Commutation of flux and charge projectors} \label{subsubsec:Commutation}

We now prove that the flux projectors defined by Eq.~\eqref{eq:flux_projector} and charge projectors defined by Eq.~\eqref{eq:charge_projector} are pairwise commuting. This commutation is key since it entails that the two families of projectors split the Hilbert space in a consistent way, and states can be labeled by their common eigenstates.

\begin{lem}[Flux permutation by vertex operators] \label{lem:flux-permutation} For a plaquette $p$ and vertex $v$ that form a site, $(p,v)=s$
 \begin{equation}
 B^{(p)}_g = \cA^{(v)}_{h^{-1}} B^{(p)}_{hgh^{-1}} \cA^{(v)}_h ; \label{eq:commutation-lemma} 
 \end{equation}

for a plaquette $p$ and vertex $v$ that are parts of different sites, $p \in s_1$, $v\in s_2$, $s_1 \neq s_2$
 \begin{equation}
 B^{(p)}_g = \cA^{(v)}_{h^{-1}} B^{(p)}_{g} \cA^{(v)}_h . \label{eq:commutation-lemma2} 
 \end{equation}

\end{lem}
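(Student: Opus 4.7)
The strategy is to track how $\cA^{(v)}_h$ conjugates each diagonal factor $T^{\pm,i}_{h_i}$ appearing in the definition of $B^{(p)}_g$, and then re-sum over the constraint $h_1 h_2 h_3^{-1} h_4^{-1} = g$. As a preliminary step I would compute the four elementary conjugation rules for $L^{\pm}_h T^{\pm}_g (L^{\pm}_h)^{-1}$ directly from the definitions of $L^{\pm}_g$ and $T^{\pm}_h$. A short computation shows that each such conjugation simply shifts the argument of $T$ by $h$ (or $h^{-1}$) on the appropriate side, with the sign determined by the two $\pm$ signs involved. These four identities can be summarized by the slogan that $\cA^{(v)}_h$ acts as a ``gauge transformation at $v$'': on any edge incident to $v$ with label $z$, it multiplies $z$ by $h$ on the end attached to $v$.

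Adopting this viewpoint, the plaquette operator $B^{(p)}_g$ is the projector onto states whose edge labels satisfy the holonomy condition $z_1 z_2 z_3^{-1} z_4^{-1} = g$, where the product is the holonomy of the plaquette loop based at its marked vertex. To prove the lemma, it then suffices to check how this holonomy transforms under the gauge transformation $\cA^{(v)}_h$, and from this infer the corresponding Heisenberg-picture identity for $\cA^{(v)}_h B^{(p)}_g (\cA^{(v)}_h)^{-1}$; the stated lemma then follows by multiplying on the left by $\cA^{(v)}_{h^{-1}}$ and on the right by $\cA^{(v)}_h$, using unitarity.

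I would then split into cases. In the first case, $v$ is the marked vertex of $p$, so the two edges of $p$ incident to $v$ are edges $1$ and $4$ in the convention of Fig.~\ref{fig:site_def}, both outgoing from $v$ and thus attached to $v$ precisely at the start of $z_1$ and at the end of $z_4$ in the product. The gauge transformation inserts an $h$ at the very left of the product and an $h^{-1}$ at the very right, turning the holonomy $g$ into $hgh^{-1}$; this directly gives Eq.~\eqref{eq:commutation-lemma}. In the second case, $v$ is not the marked vertex of $p$, i.e.~either $v$ does not touch $p$ (and then the operators commute trivially because they act on disjoint edges), or $v$ is a non-marked corner of $p$. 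In the latter situation the two edges of $p$ meeting at $v$ are consecutive factors in the holonomy product, and the $h$ factors introduced on their common endpoint appear between these two consecutive factors as a pair $h^{-1}\cdot h$ (or $h\cdot h^{-1}$, depending on the orientations), which cancels. The holonomy is therefore unchanged and $\cA^{(v)}_h$ commutes with $B^{(p)}_g$, yielding Eq.~\eqref{eq:commutation-lemma2}.

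The only real obstacle is bookkeeping: one must verify the cancellation uniformly across the possible orientation patterns of the two plaquette edges incident to a non-marked vertex (both with the plaquette traversal, both against, or one of each). Each pattern produces a cancellation of the form $(zh^{-1})(hz') = zz'$ or its obvious variants, using the elementary conjugation rules derived at the outset. Once all orientation cases are enumerated, no additional ingredient beyond the $L^{\pm}_g$/$T^{\pm}_h$ commutation identities is required, and the lemma follows.
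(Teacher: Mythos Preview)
Your proposal is correct and is essentially the same argument as the paper's, just phrased in the Heisenberg picture rather than the Schr\"odinger picture. The paper evaluates both sides of the identity on flux-basis states, splitting into the image and kernel of $B^{(p)}_g$ and tracking (via a figure) how $\cA^{(v)}_h$ shifts the plaquette holonomy; you instead conjugate each $T^{\pm}$ factor by the relevant $L^{\pm}_h$ and re-sum the constraint, which amounts to the same holonomy bookkeeping. Your treatment of the non-marked-vertex case is slightly more explicit than the paper's (which simply says ``one can dutifully check''), but no new idea is involved either way.
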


\begin{proof}
 The proof is deferred to the appendix in Sec.~\ref{subsec:proof-lemma-flux}.
\end{proof}

Based on Lemma~\ref{lem:flux-permutation} we can prove that vertex operators commute with flux projectors (although they do not commute with plaquette operators in general).

\begin{thm}
 \begin{equation}
  [ B_{C_g},\mathcal{A}_{h} ] = 0 \label{eq:commutation}
 \end{equation}

\end{thm}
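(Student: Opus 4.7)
The plan is to reduce the commutation statement to a simple consequence of Lemma~\ref{lem:flux-permutation} combined with the defining invariance of a conjugacy class. The key observation I will use first is that vertex operators are group-homomorphic in their label, i.e.\ $\mathcal{A}_h \mathcal{A}_{h^{-1}} = \mathcal{A}_e = \mathbbm{1}$, so that $\mathcal{A}_{h^{-1}} = \mathcal{A}_h^{-1}$. Given this, the first equation of Lemma~\ref{lem:flux-permutation} can be rearranged into the conjugation relation
\begin{equation}
\mathcal{A}^{(v)}_h\, B^{(p)}_g\, \mathcal{A}^{(v)}_h{}^{-1} = B^{(p)}_{h g h^{-1}},
\end{equation}
whenever $v$ and $p$ belong to the same site. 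This is the pivotal identity for the proof.

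Next, I will consider a vertex $v$ and plaquette $p$ sharing a site and compute
\begin{equation}
\mathcal{A}^{(v)}_h\, B^{(p)}_{C_g}\, \mathcal{A}^{(v)}_h{}^{-1}
= \sum_{g' \in C_g} \mathcal{A}^{(v)}_h\, B^{(p)}_{g'}\, \mathcal{A}^{(v)}_h{}^{-1}
= \sum_{g' \in C_g} B^{(p)}_{h g' h^{-1}}.
\end{equation}
The crucial step is then purely group theoretic: as $g'$ ranges over the conjugacy class $C_g$, so does $h g' h^{-1}$, since conjugation by any fixed $h \in G$ is a bijection from $C_g$ onto itself (this is the very definition of a conjugacy class). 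Relabeling the summation variable thus yields
\begin{equation}
\mathcal{A}^{(v)}_h\, B^{(p)}_{C_g}\, \mathcal{A}^{(v)}_h{}^{-1} = \sum_{g'' \in C_g} B^{(p)}_{g''} = B^{(p)}_{C_g},
\end{equation}
which after right-multiplication by $\mathcal{A}^{(v)}_h$ gives $[B^{(p)}_{C_g}, \mathcal{A}^{(v)}_h] = 0$ in the same-site case.

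For the complementary case where $v$ and $p$ do not belong to the same site, the second equation of Lemma~\ref{lem:flux-permutation} already states that each individual $B^{(p)}_g$ commutes with $\mathcal{A}^{(v)}_h$, so linearity over $g' \in C_g$ immediately gives the commutation of $B^{(p)}_{C_g}$ with $\mathcal{A}^{(v)}_h$. There is essentially no obstacle here; the only subtlety I anticipate is being careful to distinguish the two geometric cases and noting in passing that since individual $B^{(p)}_{g'}$ operators do \emph{not} commute with $\mathcal{A}^{(v)}_h$ at a shared site, it is genuinely the summation over the full conjugacy class that restores commutativity. This is exactly the gauge-invariance heuristic from Sec.~\ref{subsec:aharonov-bohm} made manifest at the operator level.
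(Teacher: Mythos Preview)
Your proof is correct and follows essentially the same route as the paper: both apply Lemma~\ref{lem:flux-permutation} together with the invariance of a conjugacy class under conjugation and the fact that $\mathcal{A}_{h^{-1}} = \mathcal{A}_h^{-1}$ to show that $\mathcal{A}_h$ conjugates $B_{C_g}$ to itself. Your treatment is slightly more explicit in separating the same-site and different-site geometric cases, which the paper's proof leaves implicit.
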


\begin{proof}
 Lemma~\ref{lem:flux-permutation} shows that the vertex operators $\cA_h$ map the states belonging to one flux sector to another flux sector. Note however that the new flux sector is in the same conjugacy class as the original flux. More formally, we have
 \begin{eqnarray}
  \cA_{h^{-1}} B_{C_g} \cA_{h} & = & \sum_{f\in C_g} \cA_{h^{-1}} B_{f} \cA_{h} \\
  & = & \sum_{f\in C_g} B_{h^{-1} f h} \\
  & = & B_{C_g}
 \end{eqnarray}
 The commutation relation~\eqref{eq:commutation} follows by noting that $\cA_{h^{-1}}=(\cA_{h})^{-1}$ since vertex operators are a representation of $G$.
\end{proof}

The immediate corollary is that charge projectors also commute with flux projectors since they are linear combination of vertex operators.

\begin{cor}
\begin{equation}
 [A_{\Gamma^G}, B_{C_g} ] = 0
\end{equation}
 
\end{cor}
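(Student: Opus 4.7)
The plan is to observe that the corollary follows almost immediately from the preceding theorem by linearity of the commutator. The theorem establishes $[B_{C_g},\mathcal{A}_h]=0$ for every single group element $h\in G$, and by definition the charge projector $A_{\Gamma^G}$ is nothing but a specific linear combination of these vertex operators, namely $A_{\Gamma^G} = \frac{d_{\Gamma^G}}{|G|}\sum_{h\in G}\chi_{\Gamma^G}(h)\,\mathcal{A}_h$.

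First, I would recall that the commutator $[X,Y]$ is bilinear in both arguments, so if $Y$ commutes with each element of a family $\{X_h\}_{h\in G}$, then it commutes with any linear combination $\sum_h c_h X_h$. I would then substitute the definition of the charge projector into the commutator:
\begin{equation}
[A_{\Gamma^G},B_{C_g}] = \frac{d_{\Gamma^G}}{|G|}\sum_{h\in G}\chi_{\Gamma^G}(h)\,[\mathcal{A}_h,B_{C_g}].
\end{equation}
By the preceding theorem, each commutator $[\mathcal{A}_h,B_{C_g}]$ on the right-hand side vanishes, so the whole sum is zero.

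There is no real obstacle here since the corollary is a straightforward consequence of the theorem combined with the defining expression of the charge projectors as class functions of the vertex operators. The only minor subtlety one might want to flag is that the statement applies whether the plaquette $p$ and vertex $v$ belong to the same site or to different sites, but Lemma~\ref{lem:flux-permutation} and the subsequent theorem already cover both cases, so the linearity argument inherits this generality without any additional work.
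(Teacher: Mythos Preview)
Your proposal is correct and matches the paper's own reasoning: the paper simply states that the corollary is immediate because charge projectors are linear combinations of vertex operators $\mathcal{A}_h$, each of which commutes with $B_{C_g}$ by the preceding theorem. Your explicit expansion via bilinearity of the commutator is exactly this argument spelled out.
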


We can interpret the commutation of the 4-body projectors as a decoupling of the charges from the fluxes. However, there's an apparent catch with both this statement and this formalism: all the $A_{\Gamma^G}$ charge projectors project unto an irreducible representation of the full group $G$, rather than the appropriate normalizer subgroups $\cN_h$ to which the charges are actually assigned. This hints at the fact that excitations of distinct energy in our family of Hamiltonians are not precisely anyons. Indeed, the internal states of some anyon types will now be split into two different energy eigenspaces. We will see how this manifests itself on the example of $\mathcal{D}(S_{3})$, in Secs.~\ref{sec:Hilbert-splitting}-\ref{sec:local-dof}. Let's start by working out in details the flux and charge projectors of $\mathcal{D}(S_{3})$.

\subsection{Example of $G=S_{3}$}
\label{subsec:D(S_3)_Hamiltonian}

The flux projectors \eqref{eq:flux_projector} in the case of $G=S_{3}$ are:
\bq
B_{C_e} &=& B_e ,\\
B_{C_y} &=& B_y + B_{y^2} ,\\
B_{C_x} &=& B_x + B_{xy} + B_{xy^2} .
\eq

The 4-body charge projectors \eqref{eq:charge_projector} for $S_3$ are:
\bq
A_{\Gamma_1} &=& \frac{1}{6} (\mathcal{A}_e + \mathcal{A}_y + \mathcal{A}_{y^2} + \mathcal{A}_x + \mathcal{A}_{xy} + \mathcal{A}_{xy^2}) ,\\
A_{\Gamma_{-1}} &=& \frac{1}{6} (\mathcal{A}_e + \mathcal{A}_y + \mathcal{A}_{y^2} - \mathcal{A}_x - \mathcal{A}_{xy} - \mathcal{A}_{xy^2}) ,\\
A_{\Gamma_2} &=& \frac{1}{3} (2 \mathcal{A}_e - \mathcal{A}_y - \mathcal{A}_{y^2}) ,
\eq
since they are based on the characters of the irreducible representations of $S_3$ (see Table~\ref{tab:irreps_of_S3} for the irreps of $S_3$). 

The refined Hamiltonian \eqref{eq:massiveH_4body} is then:
\begin{eqnarray} \label{eq:DS3_Hamiltonian}
H & = & \sum_v (\alpha A^v_{\Gamma_1} + \beta A^v_{\Gamma_{-1}} + \gamma A^v_{\Gamma_2} ) \nonumber \\ 
& & + \sum_p (\delta B^p_{C_e} + \epsilon B^p_{C_x} + \nu B^p_{C_y} ) .
\end{eqnarray}

In contrast, the 6-body projectors \eqref{eq:6-body_projector} have the form (see Table~\ref{tab:D(S3)_anyons} for the labeling of anyons of $\mathcal{D}(S_3)$):
\bq
\label{eq:6-body_proj_A}
P_A &=& \frac{1}{6} (\mathcal{A}_e + \mathcal{A}_y + \mathcal{A}_{y^2} + \mathcal{A}_x + \mathcal{A}_{xy} + \mathcal{A}_{xy^2}) B_{C_e} , \\
P_B &=& \frac{1}{6} (\mathcal{A}_e + \mathcal{A}_y + \mathcal{A}_{y^2} - \mathcal{A}_x - \mathcal{A}_{xy} - \mathcal{A}_{xy^2}) B_{C_e} , \\
P_C &=& \frac{1}{3} (2 \mathcal{A}_e - \mathcal{A}_y - \mathcal{A}_{y^2}) B_{C_e} , \\
P_D &=& \frac{1}{2} (\mathcal{A}_e + \mathcal{A}_x) B_{x} + \frac{1}{2} (\mathcal{A}_e + \mathcal{A}_{xy}) B_{xy} \nonumber \\
&& + \frac{1}{2} (\mathcal{A}_e + \mathcal{A}_{xy^2}) B_{xy^2} , \\
P_E &=& \frac{1}{2} (\mathcal{A}_e - \mathcal{A}_x) B_{x} + \frac{1}{2} (\mathcal{A}_e - \mathcal{A}_{xy}) B_{xy} \nonumber \\
&& + \frac{1}{2} (\mathcal{A}_e - \mathcal{A}_{xy^2}) B_{xy^2} , \\
P_F &=& \frac{1}{3} (\mathcal{A}_e + \mathcal{A}_y + \mathcal{A}_{y^2}) B_{C_y} , \\
P_G &=& \frac{1}{3} (\mathcal{A}_e + \omega \mathcal{A}_y + \bar{\omega} \mathcal{A}_{y^2}) B_{C_y} , \\
P_H &=& \frac{1}{3} (\mathcal{A}_e + \bar{\omega} \mathcal{A}_y + \omega \mathcal{A}_{y^2}) B_{C_y} .
\label{eq:6-body_proj_H}
\eq

The corresponding 6-local Hamiltonian \eqref{eq:massiveH_6body} would allow to freely tune the masses of the anyons, albeit at a cost of a more non-local Hamiltonian.

\section{Hilbert space splitting}
\label{sec:Hilbert-splitting}

In this Section, we elaborate on the way the charge and flux projectors split up the Hilbert space of a site. Indeed, we will see in Sec.~\ref{sec:consistent-splitting} that both the charge and flux family of projectors provide a distinct way to split the Hilbert space unto which they are acting non-trivially. Moreover, since those projectors commute, those two splittings are consistent over the Hilbert space unto which they both act non-trivially, i.e. the Hilbert space of 2 spins which has dimension $|G|^2$. 

We will argue that the splitting of the common Hilbert space of charge and flux operators induces a splitting of the proper Hilbert space of a site. Because sites overlap, the dimension of the proper Hilbert space of a single site is smaller than the Hilbert space of the 6 spins forming the site. We prove in Sec.~\ref{sec:dimension-site} that this proper Hilbert space also has dimension $|G|^2$.  In Sec.~\ref{sec:diagram}, we introduce a diagrammatic representation of this splitting. This diagram encapsulates all the results of this paper about the structure of refined quantum double models.

\subsection{Two distinct yet consistent ways to split the Hilbert space}
\label{sec:consistent-splitting}

We first prove that the charge and flux projectors, which respectively act non-trivially on four spins, add up to the identity operator on the Hilbert space of dimension $|G|^4$ of the four spins. Since they are orthogonal projectors, charge (resp. flux) projectors provide an orthogonal resolution of the identity, i.e., the direct sum of their images amounts to the full Hilbert space.

\subsubsection{Resolution of the identity for charge projectors}

\begin{lem}
The dimension of the image of the charge projector for the irreducible representation $\Gamma$ is
\begin{equation}
\mbox{Tr}\left[A_{\Gamma}\right]=\left|G\right|^{3}d_{\Gamma}^{2}
\end{equation}
where $d_\Gamma$ is the dimension of the irrep $\Gamma$.
\end{lem}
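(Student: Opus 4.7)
The plan is to compute the trace directly from the definition of $A_{\Gamma}$ in Eq.~\eqref{eq:charge_projector} and use the fact that, since Theorem~\ref{thm:charge-projector} established $A_\Gamma$ is a projector, its trace equals the dimension of its image. By linearity of the trace,
\begin{equation}
\mbox{Tr}\left[A_{\Gamma}\right]=\frac{d_{\Gamma}}{|G|}\sum_{g\in G}\chi_{\Gamma}(g)\,\mbox{Tr}\left[\mathcal{A}_{g}^{v}\right],
\end{equation}
so the whole task reduces to evaluating $\mbox{Tr}[\mathcal{A}_{g}^{v}]$ for each $g\in G$.

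Next I would compute the trace of the single-edge multiplication operators. Since $L^{+}_{g}\ket{z}=\ket{gz}$, the diagonal matrix element is $\braket{z}{gz}=\delta_{z,gz}=\delta_{g,e}$, giving $\mbox{Tr}[L^{+}_{g}]=|G|\,\delta_{g,e}$, and an identical argument yields $\mbox{Tr}[L^{-}_{g}]=|G|\,\delta_{g,e}$. Because $\mathcal{A}_{g}^{v}$ is a tensor product of four such operators acting on four distinct edges (two with $L^{+}_{g}$ and two with $L^{-}_{g}$), the trace factorises into a product of four single-edge traces:
\begin{equation}
\mbox{Tr}\left[\mathcal{A}_{g}^{v}\right]=\bigl(\mbox{Tr}[L^{+}_{g}]\bigr)^{2}\bigl(\mbox{Tr}[L^{-}_{g}]\bigr)^{2}=|G|^{4}\,\delta_{g,e}.
\end{equation}

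Plugging this back in, only the $g=e$ term survives the sum. Using $\chi_{\Gamma}(e)=\mbox{Tr}[\Gamma(e)]=d_{\Gamma}$, we obtain
\begin{equation}
\mbox{Tr}\left[A_{\Gamma}\right]=\frac{d_{\Gamma}}{|G|}\cdot d_{\Gamma}\cdot|G|^{4}=|G|^{3}\,d_{\Gamma}^{2},
\end{equation}
which is the claimed formula.

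There is no real obstacle here; the computation is essentially a routine unpacking of definitions. The only subtlety worth flagging is the use of the fact that $A_{\Gamma}$ is a projector (established in Theorem~\ref{thm:charge-projector}), which is what allows us to identify $\mbox{Tr}[A_{\Gamma}]$ with the dimension of its image rather than merely an abstract trace. As a sanity check, summing over all irreps of $G$ gives $\sum_{\Gamma}|G|^{3}d_{\Gamma}^{2}=|G|^{4}$ (via $\sum_{\Gamma}d_{\Gamma}^{2}=|G|$), which correctly equals the dimension of the 4-spin Hilbert space on which the charge projectors act, consistent with the orthogonal resolution of identity to be established in the next step of Sec.~\ref{sec:consistent-splitting}.
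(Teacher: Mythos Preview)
Your proof is correct and follows essentially the same approach as the paper: both compute $\mbox{Tr}[\cA_g]=|G|^4\delta_{g,e}$ by noting that the single-edge operators $L^{\pm}_g$ are permutation matrices with no fixed points unless $g=e$, then use $\chi_\Gamma(e)=d_\Gamma$ to conclude. Your explicit computation of the diagonal entries and the added sanity check are nice touches, but the argument is the same as the paper's.
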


\begin{proof}
Recall that the vertex operators $\cA_g$ are tensor products of 4 copies of the (left) regular
representation $L$. $L(g)$ matrices are permutations with no fixed points, unless $g=e$. Since the trace of a tensor product is the product of the trace, $A(g)$ is traceless unless $g=e$. The vertex operator $\cA_e$ is nothing but the identity matrix on a space of dimension $|G|^4$. Thus, 
\begin{equation}
 \mbox{Tr}\cA_{g}=\left|G\right|^{4}\delta_{ge}
\end{equation}

Simple calculation yields
\begin{eqnarray}
\mbox{Tr}\left[A_{\Gamma}\right] & = & \frac{d_{\Gamma}}{\left|G\right|}\sum_{g\in G}\chi_{\Gamma}(g)\mbox{Tr} \cA_{g} \nonumber \\
 & = & \left|G\right|^{3}d_{\Gamma}\chi_{\Gamma}(e) \nonumber \\
 & = & \left|G\right|^{3}d_{\Gamma}^{2} .
\end{eqnarray}
\end{proof}

To see that the charge projectors add up to the identity on the Hilbert space of the 4 spins, we use a well-known fact from representation theory
\begin{equation}
 \sum_{\Gamma}d_{\Gamma}^{2}=|G|.
\end{equation}
Dimension counting
and the fact that charge projectors are orthogonal allows us to conclude that  
\begin{equation}
\sum_{\Gamma} A_{\Gamma}=\id_{|G|^{4}}
\end{equation}
i.e., the charge projectors are an orthogonal resolution of the identity for the Hilbert space of the 4 spins neighboring a vertex.

\subsubsection{Resolution of the identity for flux projectors}

\begin{lem}
The dimension of the image of the flux projector for the conjugacy class $C_g \subset G$ is
\begin{equation}
\mbox{Tr}\left[B_{C_g}\right]=|C_g||G|^{3}.
\end{equation}
where $|C_g|$ is the cardinality of the conjugacy class.
\end{lem}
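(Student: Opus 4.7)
The plan is to compute the trace by expanding $B_{C_g}$ in the flux (group-element) basis, in which the operators $T^\pm_h$ are diagonal, and then simply count basis vectors satisfying the relevant constraint.

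First I would use the definition $B_{C_g} = \sum_{h' \in C_g} B^p_{h'}$ and linearity of the trace to reduce the problem to computing $\mathrm{Tr}\, B^p_{h}$ for a single group element $h$. Next, I would recall from Eq.~\eqref{eq:plaq_op_def} that $B^p_h$ is a sum, over tuples $(h_1,h_2,h_3,h_4)$ satisfying $h_1 h_2 h_3^{-1} h_4^{-1} = h$, of tensor products of the diagonal operators $T^{\pm}_{h_i}$. Since each $T^{+}_{h_i}$ (respectively $T^{-}_{h_i}$) is the rank-one orthogonal projector onto the basis state $\ket{h_i}$ (respectively $\ket{h_i^{-1}}$) in the flux basis, each term in the sum is a rank-one projector onto a single product basis vector, and distinct terms in the sum project onto orthogonal basis vectors.

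Therefore $\mathrm{Tr}\, B^p_h$ equals the number of solutions $(h_1,h_2,h_3,h_4) \in G^4$ to $h_1 h_2 h_3^{-1} h_4^{-1} = h$. For any choice of $h_2, h_3, h_4 \in G$ (giving $|G|^3$ possibilities), $h_1 = h\, h_4 h_3 h_2^{-1}$ is uniquely determined. Thus $\mathrm{Tr}\, B^p_h = |G|^3$, independent of $h$, and summing over $h' \in C_g$ yields
\begin{equation}
\mathrm{Tr}\, B_{C_g} = \sum_{h' \in C_g} \mathrm{Tr}\, B^p_{h'} = |C_g|\,|G|^3,
\end{equation}
as claimed.

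There is no real obstacle here; the only subtlety is keeping track of signs in the exponents coming from the orientation convention (which determines whether a given $T^{\pm}_{h_i}$ projects onto $\ket{h_i}$ or $\ket{h_i^{-1}}$), but since the map $h_i \mapsto h_i^{-1}$ is a bijection of $G$, the count of basis vectors is unaffected and the answer $|G|^3$ for each $h$ is the same regardless of the convention chosen in Fig.~\ref{fig:site_def}.
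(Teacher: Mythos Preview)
Your proof is correct and follows essentially the same approach as the paper: both recognize that $B_{C_g}$ is a sum of mutually orthogonal rank-one projectors in the flux basis and reduce the trace computation to counting tuples $(h_1,h_2,h_3,h_4)\in G^4$ whose oriented product lies in $C_g$. The only cosmetic difference is that you first fix $h\in C_g$ and count $|G|^3$ solutions per $h$ before summing, whereas the paper counts directly by fixing three elements and noting $|C_g|$ choices for the fourth; your remark on the orientation bijection $h_i\mapsto h_i^{-1}$ is a welcome clarification.
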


\begin{proof}
Flux projectors are sum of rank-one projectors unto fluxes that belong
to the same conjugacy class $C_g$. Thus, to compute the dimension
of the image of the flux projectors, one needs to compute how many
terms appear in the sum, i.e., how many ways 4 group elements can
be multiplied such that their product belongs to the conjugacy class
$C_g$. The first three group elements $a,b,c$ can be chosen
arbitrarily in $|G|^{3}$ distinct ways. Then the fourth
group element $d$ is chosen such that the product belongs to the
conjugacy class $C_g$, i.e., $d\in(abc)^{-1} C_g$. Thus,
there are $|C_g|$ choices for $d$. This concludes the proof.
\end{proof}

Moreover, since every group element belongs to one and only one conjugacy
class, we know that 
\begin{equation}
 \sum_{C_g\subset G}\left|C_g\right|=|G|.
\end{equation}
Dimension counting and the fact that flux projectors are orthogonal
allows us to conclude that 
\begin{equation}
\sum_{C_g\subset G}B_{C_g}=\id_{|G|^{4}}
\end{equation}
i.e., the flux projectors are an orthogonal resolution of the identity for the Hilbert space of the 4 spins of a plaquette.

\subsection{Dimension of the proper Hilbert space of a site}
\label{sec:dimension-site}

Since the flux and charge projectors pairwise commute (see Sec.~\ref{subsubsec:Commutation}), they provide a consistent splitting of the Hilbert space unto which they both act non-trivially in the sense that a basis of this Hilbert space is spanned by common eigenstates. It is clear that the intersection of their geometric support is two spins. The corresponding Hilbert space has dimension $|G|^2$.

Here we want to argue that this splitting of Hilbert space induces a splitting of the Hilbert space of a site. Naively, a site is made of 6 spins but since spins are shared by many sites, the dimension of its proper Hilbert space is smaller than $|G|^6$. We will show that it is $|G|^2$, the same as the common Hilbert space of flux and charge projectors.

To determine the dimension of this proper Hilbert space, first recall that a site is the union of the four spins around a plaquette and the four spins around a neighboring vertex. Since 2 spins are shared, a site consists of 6 spins. However, each spin belongs to three distinct sites: one site in which it belongs to both the vertex and the plaquette, one site for the other vertex and one site for the other plaquette, see Fig.~\ref{fig:1edge3sites}. Thus, the dimension of the (proper) Hilbert space associated to every site is 

\begin{equation}
 d\left( \cH_\mathrm{site} \right) = \sqrt[3]{|G|^6} = |G|^2
\end{equation}

\begin{figure}[ht]
\begin{centering}
\includegraphics[width=0.25\textwidth]{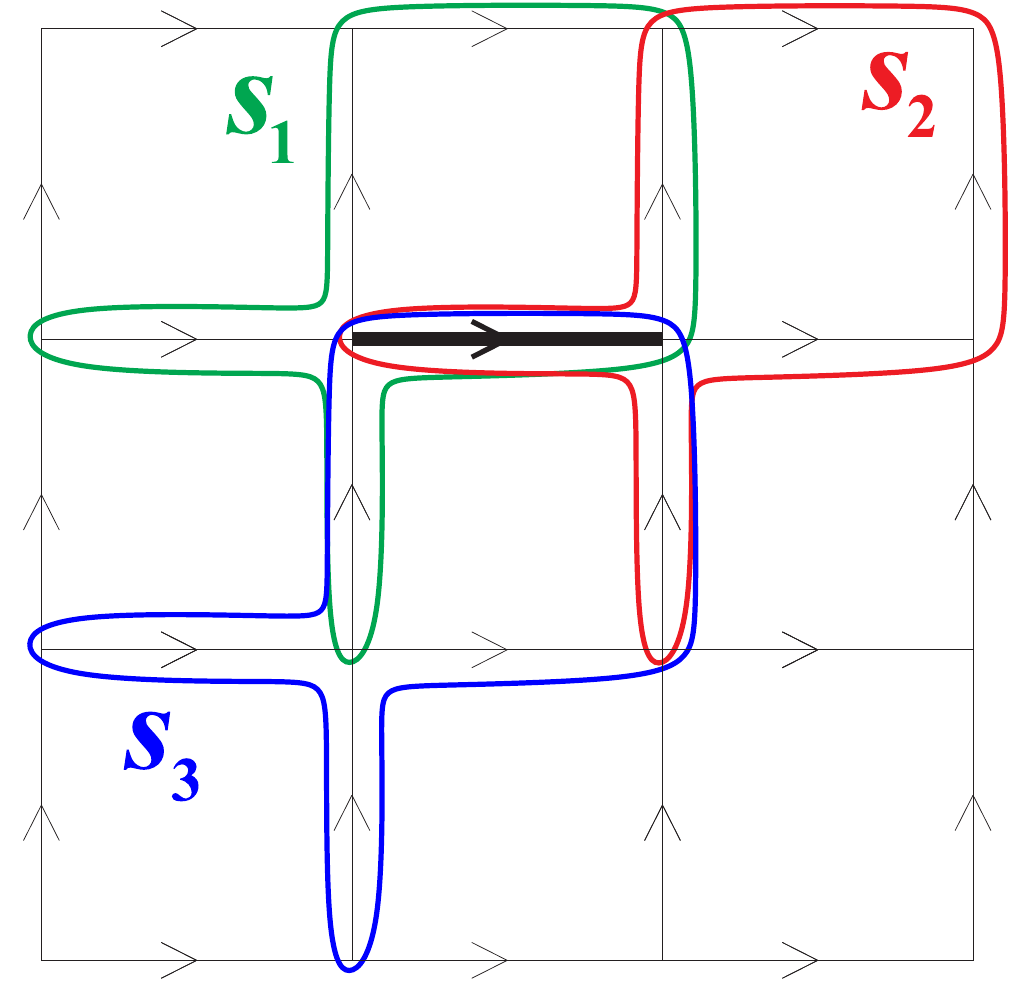}
\caption{(Color online) Illustration of the fact that every edge belongs to exactly 3 sites. For the thick edge in the figure the 3 sites are $s_1$, $s_2$ and $s_3$.}
\label{fig:1edge3sites}
\end{centering}
\end{figure}

A simple way to think about this is that for every site, the two spins shared between the vertex and the plaquette are assigned to this site while other spins of the site are assigned to other neighboring sites.

\subsection{Diagrammatic representation and energy sectors}
\label{sec:diagram}

We now introduce a diagrammatic representation of the splitting of the proper Hilbert space of a site which we consider to be a very useful tool to better understand the structure of quantum double models.

The diagram, represented on Fig.~\ref{fig:anyon_partitioning} for the case of $\mathcal{D}(S_3)$, is a square of size $|G|$. Each column is indexed by an irrep $\Gamma$ of $G$ and its width is the squared dimension of the irrep $d^2_\Gamma$. Columns thus correpond to the splitting of the Hilbert space induced by the charge projectors. Similarly,  each row is indexed by a conjugacy class $C_g$ of $G$ and its width is the cardinality of the conjugacy class $|C_g|$. Rows correspond to the splitting induced by the flux projectors.

\begin{figure}[ht]
\begin{centering}
\includegraphics[width=0.8\columnwidth]{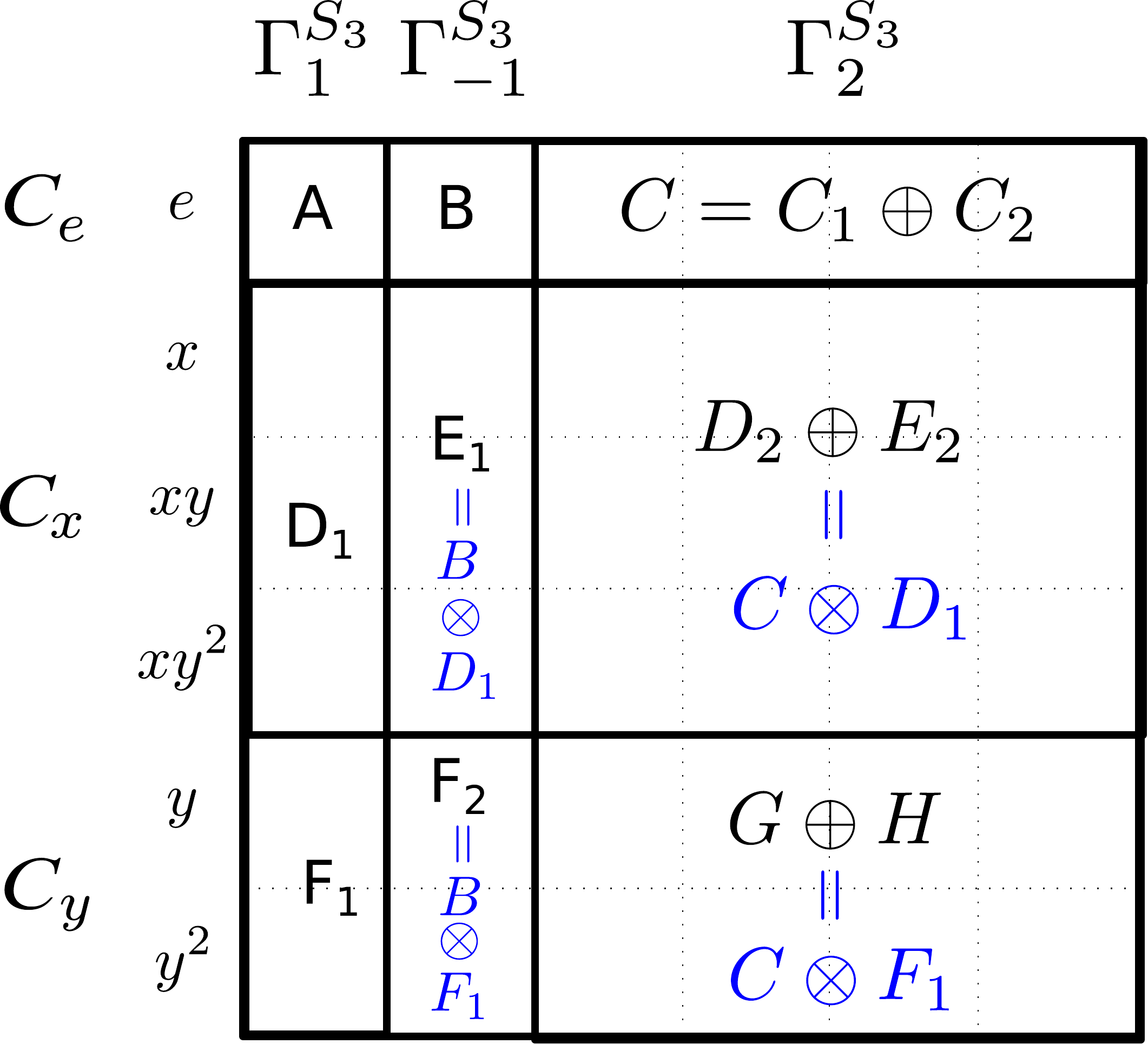}
\caption{(Color online) The flux and charge projectors of $\mathcal{D}(S_3)$ partition the Hilbert space of dimension $|S_3|^2=36$ unto which both family of operators act non-trivially. The charge projector splitting defines columns. The flux projectors, corresponding to conjugacy classes, define rows (each row between dotted lines corresponds to a group element). The 9 energy sectors are represented $\{ A, B, C=C_1 \oplus C_2, D_1, E_1, D_2 \oplus E_2, F_1, F_2, G \oplus H  \}$. The labels are chosen to reflect the relation of the excitations with the 8 anyon types of $\mathcal{D}(S_3)$. In particular, anyons $D$, $E$ and $F$ appear in two distinct energy sectors and there are two copies of the  chargeon $C$, labeled $C_1$ and $C_2$. Note that the area of the surface attributed to each anyon is equal to the square of its quantum dimension. Labels in blue correspond to the reinterpretation of certain excitations as a combination of other excitations (see Sec.~\ref{subsubsec:flavor_interpretation}). }
\label{fig:anyon_partitioning}
\end{centering}
\end{figure}

\subsubsection{Labeling of the energy sectors}

Each intersection is now labeled by a conjugacy class and an irrep. Notice that every such intersection will have a well-defined energy (see our Hamiltonian, Eq.~\eqref{eq:DS3_Hamiltonian}), we will call these intersections \emph{energy sectors}. However, these energy sectors do not correspond directly to anyon types since an irrep of the full group $G$ can split into the direct sum of irreps of the normalizer of the conjugacy class. Let's explore this on the example of $G=S_3$. 

\paragraph{Trivial representation: $D_1$ and $F_1$}

Restricting the trivial representation of $S_3$ to the normalizer subgroup $\cN_x$ or $\cN_y$ will correspond to the trivial representation of both of those subgroups, i.e., 

\begin{align}
 \Gamma^{S_3}_1 |_{\cN_x} & =  \Gamma^{\mathbb{Z}_2}_1 \label{eq:splitup_Gamma1_Nx} \\
 \Gamma^{S_3}_1 |_{\cN_y} & =  \Gamma^{\mathbb{Z}_3}_1 \label{eq:splitup_Gamma1_Ny}
\end{align}
Thus, the energy sectors in the first column, corresponding to the trivial irrep of $S_3$, correspond to anyon types $A$, $D$ and $F$ (please refer to Table~\ref{tab:D(S3)_anyons} for anyon labels for $S_3$) depending on their row, i.e., their conjugacy class. For the non-abelian anyons $D$ and $F$, we will label those energy sectors $D_1$ and $F_1$ since we will see shortly that other energy sectors correspond to those anyon types as well.

\paragraph{Alternating representation: $E_1$ and $F_2$}

Similarly, restricting the alternating representation of $S_3$ to $\cN_x$ corresponds to the alternating representation of $\cN_x$ (excitation $E_1$), and restricting it to $N_y$ will give the trivial representation of $\cN_y$ (excitation $F_2$). Thus, we have already uncovered two energy sectors for anyon $F$.

\begin{align}
 \Gamma^{S_3}_{-1} |_{\cN_x} & =  \Gamma^{\mathbb{Z}_2}_{-1} \label{eq:splitup_Gamma-1_Nx} \\
 \Gamma^{S_3}_{-1} |_{\cN_y} & =  \Gamma^{\mathbb{Z}_3}_1 \label{eq:splitup_Gamma-1_Ny}
\end{align}

\paragraph{Two-dimensional representation: $D_2 \oplus E_2$ and $G \oplus H$}

Finally, the two-dimensional representation, restricted to $\cN_x$ or $\cN_y$ will break up to two 1-dimensional representations on the subgroups. These 1-dimensional representations will be the trivial and the alternating of $\cN_x$ (dyons $D_2$, $E_2$), and the two nontrivial representations of $\cN_y$ (dyons $G$ and $H$).

\bq \label{eq:splitup_Gamma2_Nx}
 \Gamma^{S_3}_2 |_{\cN_x} & =&  \Gamma^{\mathbb{Z}_2}_1 \oplus \Gamma^{\mathbb{Z}_2}_{-1} \\
 \Gamma^{S_3}_2 |_{\cN_y} & =&  \Gamma^{\mathbb{Z}_3}_{\omega} \oplus \Gamma^{\mathbb{Z}_2}_{\bar{\omega}}
\label{eq:splitup_Gamma2_Ny}
\eq

We refer the reader to Tables~\ref{tab:irreps_of_S3}-\ref{tab:irreps_of_Z3_Z2} to check these relations between the representations of $S_3$ and its subgroups. The breaking of irreps of the full group $G=S_3$ into irreps of its subgroups $\mathbb{Z}_2$ and $\mathbb{Z}_3$ explains how our refined Hamiltonian correctly accounts for anyons D, E, F, G and H although the irreps of the normalizers $\mathbb{Z}_2$ and $\mathbb{Z}_3$ do not have an associated Hamiltonian term. This property can be made general for an arbitrary group $G$ by discussing \emph{induced} representations, which we do in Appendix~\ref{sec:D(G)_splitup}.

Even in the smallest non-abelian example (quantum double of $S_3$), irrep breaking leads to a very intricate splitting of the Hilbert space. Consider the rectangle labeled by $C_x$ and $\Gamma^{S_3}_2$. The two-dimensional irrep will split into the sum of two one-dimensional irreps of $\mathbb{Z}_2$. However, the splitting is slightly different since the normalizers $\cN_x$, $\cN_{xy}$ and $\cN_{xy^2}$, while isomorphic, are not equal.

\subsubsection{Energies}

Recall the Hamiltonian given by Eq.~\eqref{eq:DS3_Hamiltonian}. We can compute the energy associated to each energy eigenspace (energy sector), which we denote $J$ to avoid confusion with anyon type $E$. This Hamiltonian assigns the following energies to the excitations: $J_A=\alpha+\delta$, $J_B=\beta+\delta$, $J_{C_1}=J_{C_2}=\gamma+\delta$, $J_{D_1}=\alpha+\epsilon$, $J_{D_2}=J_{E_2}=\gamma+\epsilon$, $J_{E_1}=\beta+\epsilon$, $J_{F_1}=\alpha+\nu$ and $J_{F_2}=\beta+\nu$, $J_G=J_H=\gamma+\nu$. Thus, we see that anyon types $D$, $E$ and $F$, can be in different energy eigenspaces. This is surprising and should not be possible from a topological point of view. However, anyons in a quantum double are not fundamental particles, rather emergent quasi-particles on a \emph{lattice} model. We will now explore further this discrepancy and see that the existence of local degrees of freedom on a lattice explains the different energies attributed to states corresponding to the same anyon at the mesoscopic level.

\subsubsection{Dimension and area of the diagram}

Finally, note that the area of the rectangle (or the sum of the areas of distinct rectangles when an anyon occupies different energy sectors) is exactly the squared quantum dimension of that anyon $(d_k)^2$. Since the area of the whole square is $|G|^2$, we recover the well-known result
\begin{equation}
 \mathcal{D}^2 \equiv \sum_k (d_k)^2 = |G|^2 \label{eq:area-diagram}
\end{equation}
We will see that the topological degrees of freedom of an anyon have dimension $d_k$ while the local degrees of freedom have also dimension $d_k$, which results in a dimension $(d_k)^2$ for each anyon.

\section{Local degrees of freedom}
\label{sec:local-dof}

We now elucidate the fact that anyon types are not in one-to-one correspondence with energy sectors. We will argue that anyon types are labels that are topological at the mesoscopic level, in the sense that they cannot be changed locally. However, additional \emph{local} degrees of freedom, which can be modified by local unitary transformations acting close to the excitations, also arise. We explore the complex interplay of those different types of degrees of freedom.

\subsection{Disagreement between anyons and energy sectors}

The way the Hilbert space of a site is split up by the charge and flux projectors, detailed in Sec.~\ref{sec:Hilbert-splitting}, leads to a disagreement between energy sectors of our Hamiltonian and anyon labels. Here, we will explain in detail what we mean by this disagreement. 

First, chargeons appear in mutliple copies. For $G=S_3$, the chargeon $C$ corresponding to the non-trivial 2D irrep appears in 2 copies, labelled $C_1$ and $C_2$. In general, an irrep $\Gamma^G$ will result in a number of copies equal to its dimension $d_{\Gamma^G}$. This simply reflects that the multiplicity of the irrep in the regular representation is equal to its dimension. 

Second, some anyons appear in multiple energy sectors. As an example, let's look at anyon D, which appears in two distinct energy sectors of the diagram since the trivial irrep of $\mathbb{Z}_2$ can be obtained from the trivial irrep of $S_3$, see Eq.~\eqref{eq:splitup_Gamma1_Nx}, or from the two-dimensional irrep of $S_3$, see Eq.~\eqref{eq:splitup_Gamma2_Nx}. We say that anyon D comes in two distinct \emph{charge flavors}. Each charge flavor is an eigenspace of the Hamiltonian. $D_1$ labels a subspace with dimension three and is within the image of the trivial irrep of $S_3$ whereas the label $D_2$ labels a subspace of dimension six and is within the image of the two-dim irrep of $S_3$. The same phenomenon relates $E_1$ to $E_2$ and $F_1$ to $F_2$.

It seems peculiar that a local observable allows to distinguish two subspaces of internal states of anyon $D$ (i.e. the two charge flavors). This even seems like a violation of anyonic properties of D, since by simply applying the local operators $A_{\Gamma^{S_3}_1}$ and $A_{\Gamma^{S_3}_2}$ we can establish a global labeling that differentiates between the two charge flavors based on their energies. How is that possible if both those charge flavors of D are just subspaces of one and the same anyon? We will argue that the anyon labelling corresponds to degrees of freedom that cannot be changed locally whereas there exist \emph{local degrees of freedom} that can be changed locally. The charge projectors discriminates among those local degrees of freedom.

The surprising property that site excitations corresponding to the same anyon type can have different energies is not a peculiarity of our family of Hamiltonians. In fact, this property was already present in Kitaev's original Hamiltonian. Indeed, in the original quantum double construction, the pairs $(D_1,\,D_2)$ and $(F_1,\,F_2)$ would have different energy. Our family of Hamiltonian simply highlights this property.

\subsection{The role of finite lattice spacing}

The charge projectors act on the four spins around a vertex, and not on the remaining two spins of a site, see Fig.~\ref{fig:A_B_acting_on_sites}. They can be interpreted as operators that coherently move all fluxon types and check that they transform according to the correct irrep~\cite{BM08}. In particular, note that those test fluxons do not enclose the flux content of the \emph{site}.

Now, let us recall the interference experiment described in Sec.~\ref{subsec:aharonov-bohm}, that allows us to determine the charge of a dyon by having test fluxons undergo a double slit experiment with the dyon located behind the slits. It was key in that experiment that the flux of the test fluxon ($a$) and the flux of the measured dyon ($b$) have commuting labels, i.e.,  $ab = ba$, in order to have interference. The consequence of this requirement was that the charge of the measured dyon was labeled by an irrep of the normalizer $\mathcal{N}_b$ rather than an irrep of the full group $G$. However, this requirement stemmed from the fact that in a topological quantum field theory, to determine the charge of a dyon, one cannot avoid enclosing the flux of the dyon as well. But does that fact still hold in our \emph{lattice} model?

Indeed, in the quantum double construction, the charge of a dyon is located on a vertex whereas its flux is located on a plaquette, see Fig.~\ref{fig:charge_flux_separation}. In other words, the lattice separates charge and flux. This separation then allows something that would be impossible in a field theory: to braid the test fluxon with the charge part of a dyon without enclosing its flux. The corresponding wordline for the text fluxon is represented in purple on Fig.~\ref{fig:charge_flux_separation} (worldline 1), whereas the wordline allowed by field theory is represented in black (worldline 2). Consequently, this experiment discriminates different \emph{charge flavors} of a dyon.

\begin{figure}[ht]
\begin{centering}
\includegraphics[width=1.0\columnwidth]{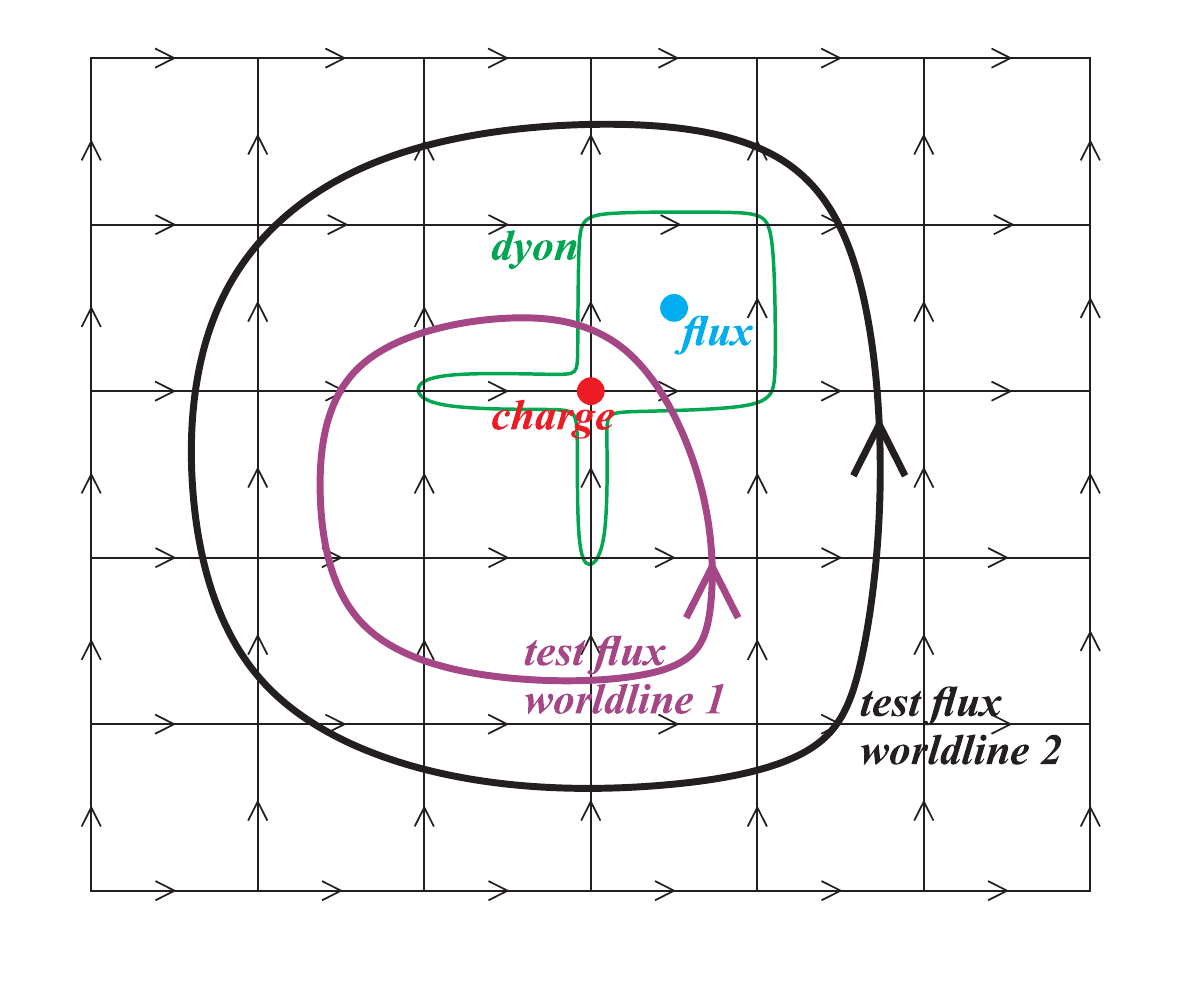}
\caption{Spatial separation of charge and flux of a dyon on the lattice: the charge is located on a vertex, while the flux is on the plaquette. This allows one to take a test flux around \emph{only the charge part} of a dyon following the test fluxon worldline 1. Such interferometric experiment allows to determine not only its charge but also its charge flavor since it is unaffected by the flux of the dyon. On the contrary, topological quantum field theory only allows for test fluxon worldline 2 which encloses both the charge and flux of the dyon. }
\label{fig:charge_flux_separation}
\end{centering}
\end{figure}

\subsubsection{Interpreting charge flavors in $\mathcal{D}(S_3)$}
\label{subsubsec:flavor_interpretation}

Based on the arguments above, we can understand better the meaning of the charge flavors $D_1$ vs $D_2$, $E_1$ vs $E_2$  or $F_1$ vs $F_2$. For instance, for anyon F: $F_1$ is a pure fluxon since its charge is the trivial irrep of $S_3$ while $F_2$ has the non-trivial alternating charge (see Fig.~\ref{fig:anyon_partitioning}). They both correspond to anyon $F$ since the alternating charge becomes trivial when restricted to the normalizer $\mathcal{N}_y$ as indicated by Eq.~\eqref{eq:splitup_Gamma-1_Ny}. One way to interpret this result is that $F_2$ is an excitation on a site which contains both a fluxon $F_1$ on the plaquette and a chargeon $B$ on the vertex. Thus,
\be
F_2 = B \otimes F_1  
\ee
which agrees with the known fusion rules of $\mathcal{D}(S_3)$ which state that $B \otimes F = F$ \cite{BSW11}. In terms of masses, one can notice that $M_{F_2}=M_{F_1}+M_B$, where for anyon X: $M_X = J_X - J_A$, i.e. mass is the energy penalty of anyon X compared to vacuum. Thus, one can think of $F_2$ as a composite anyon made of $F_1$ and $B$. We will see that a similar relation between masses will be true for the following examples as well.

Similarly, the anyon $E_1$ is a composite anyon made of the fluxon $D_1$ with the chargeon $B$
\be
E_1 = B \otimes D_1,
\ee
as well as the fusion rules state $B \otimes D = E$ \cite{BSW11}.

The other case of energy sector--anyon disagreement is slightly more involved since it involves a direct sum:
\begin{equation}
 D_2 \oplus E_2 = C \otimes D_1  
\end{equation}
i.e., the combination of the chargeon $C$ (either from the $C_1$ or $C_2$ copies) with a fluxon $D_1$ is a superposition of anyon $D$ and $E$ with a charge corresponding to the two-dimensional irrep of $S_3$. The Hamiltonian doesn't distinguish $D_2$ from $E_2$ since they have the same energy. The fusion rules are again in agreement with this statement: $C \otimes D = D \oplus E$~\cite{BSW11}. Similarly, the energy sector $G\oplus H$ results from the combination of a chargeon $C$ and a fluxon $F_1$:
\begin{equation}
 G \oplus H = C \otimes F_1,  
\end{equation}
which agrees with the fusion rule $C \otimes F = G \oplus H$ \cite{BSW11}.

We can relabel the energy sectors of Fig.~\ref{fig:anyon_partitioning} based on those combinations of flux and charge. The new labels are indicated in blue.

\subsection{Local vs global degrees of freedom}

We now argue that charge flavor is a local degree of freedom which can be transformed by a local unitary whereas anyon labels cannot be changed locally. We present an intuitive argument and refer to~\cite{BM08} for a formal, yet distinct, argument.

The charge flavor cannot be discriminated by any operator that encloses the whole site $s=(v,p)$, since it requires enclosing the vertex $v$ without enclosing the plaquette $p$ (see Fig.~\ref{fig:charge_flux_separation}). This means that two distinct charge flavors, say $D_1$ and $D_2$, have the same reduced density matrix outside the site, i.e., on the set of spins that do not belong to that site $s$. Yet, they correspond to distinct global states $\ket{\psi_1}$ and $\ket{\psi_2}$ on the whole lattice. Since those states are purifications of the same reduced density matrix, there exists a local unitary transformation $U_s$ acting only on the site $s$ such that $U_s \ket{\psi_1} = \ket{\psi_2}$. A similar statement holds for different copies of a chargeon such as $C_1$ and $C_2$.

The presence of local degrees of freedom explains that the dimension of the subspace associated with an anyon labeled by the conjugacy class $C_g$ and the irrep $\Gamma$ of its normalizer is 
\begin{equation}
 d^2=\left(|C_g||d_\Gamma|\right)^2
\end{equation}
rather than $d$, which we expect from topological quantum field theory~\cite{Preskill98}. The dimension of the anyon is the product of the dimensions of its local and topological degrees of freedom. It turns out that for quantum double models there are as many local degrees of freedom as global topological degrees of freedom~\cite{BM08}, i.e.,
\begin{equation}
 d_\mathrm{local}=d_\mathrm{topo}=d. 
\end{equation}

Thus, in a quantum double model, due to the lattice, each anyon corresponds to a subspace of dimension 
\begin{equation}
 d_\mathrm{local} \times d_\mathrm{global} = d^2.
\end{equation}
This result confirms the observation made on the anyon splitting diagram of Fig.~\ref{fig:anyon_partitioning} in which each anyon corresponds to a surface of area $d^2$. Moreover, the total area $|G|^2=\sum_k d_k^2$ is a graphical representation of the identity given by Eq.~\eqref{eq:area-diagram}.

\section{Discussion}
\label{sec:conclusions}

In this paper, we introduced a new family of 2D topological spin lattice models which generalize Kitaev's quantum double construction. The Hamiltonian of this new class of topological models are given by a translation-invariant sum of local commuting terms acting each on 4 neighboring spins. 

We provided a proof on the commutation of those operators which is based on a basis-independent reformulation of the Great Orthogonality Theorem.

Each local term of that refined Hamiltonian can be multiplied by a coupling constant which makes the energy spectrum of those models richer than the original Kitaev quantum double construction. Moreover, the new Hamiltonian highlights the feature that point-like excitations on a site corresponding to the same anyon can have different energies. This feature arises because the lattice introduces local degrees of freedom in addition to topological degrees of freedom. The interplay between those degrees of freedom might lead to surprising consequences.

\subsection{Consequences for quantum computation}

The disagreement between anyons and energy sectors is already present in the original quantum double construction, since Kitaev's Hamiltonian would give different masses to $D_1$ which is a fluxon than $D_2$ which is a dyon (from the point of view of irreps of $S_3$). Similar properties hold for the two charge flavors of anyon $F$, labeled $F_1$ and $F_2$, as well as anyon $E$, labeled $E_1$ and $E_2$, the latter would however not be distinguished by Kitaev's Hamiltonian. This leads us to the troubling question of what (if any) consequences will arise in quantum computation with non-abelian anyons when performing them on a lattice?

As the disagreement between anyons and energy sectors arises due to the finite separation between flux and charge of a dyon, one would have to be careful to perform every braiding procedure on a large scale, making sure to always braid with both flux and charge of a dyon. On a large enough lattice system, we can imagine the spacing will become insignificant, and no consequences will arise. On the other hand, the environment could introduce local noise that will project out one or the other charge flavor of a dyon, possibly resulting in unexpected processes, if for example, the local degrees of freedom entangle with the topological degrees of freedom. It is possible that this will not create problem for topological quantum computation since it occurs in fusion space. Nonetheless, clarifying those consequences needs careful consideration, and is the scope of future work.

\subsection{Consequences for quantum memories}

Using our family of Hamiltonians allows for tuning the masses of excitations, which will modify both the coherent dynamics and the incoherent dynamics of the topological model in the presence of a (thermal) environment. Thus, our family of Hamiltonian opens a new possibility for quantum self-correcting models based on topological models. Indeed, our models generalize the abelian construction in Ref.~\cite{BAP14} where a parameter regime interesting for quantum self-correction was identified. In that regime, it was argued that entropic effects lead to a different scaling of the memory time. While that improvement was shown to not carry over in the low temperature regime~\cite{KLT16}, a non-abelian model might yield a different result or, at least, allow for a better understanding of entropic effects in quantum double models.

\subsection{Holography between local, topological and fusion degrees of freedom?}

The fact that local degrees of freedom and topological degrees of freedom have the same dimension $d_k$ (where $k$ labels the anyon types) might be a clue pointing to an underlying holography. Moreover, the dimension of the subspace associated to an anyon on a site is $(d_k)^2$, which is the same dimension as the fusion space of two anyons of type $k$. We wonder whether this also hints at a deeper mathematical/physical connection.

Finally, it seems that local degrees of freedom are somehow unavoidable in a quantum double construction. Indeed, anyons live on a site, whose proper Hilbert space dimension is $|G|^2$. In the absence of local degrees of freedom, the direct sum of every anyon subspace would have dimension $\sum_k d_k$. Since this last quantity is not simply related to the dimension of the group $|G|$, local degrees of freedom have to account for the dimension mismatch. The situation is very different in Levin-Wen models~\cite{LW05} in which the dimension of the spin is precisely the number of anyon types.

\section{Acknowledgments}

We thank John Preskill, Alexei Kitaev, David Aasen, Ben Levitan, Sujeet Shukla and Dominic Williamson for helpful discussions. We acknowledge funding provided by the Institute for Quantum Information and Matter, an NSF Physics Frontiers Center (NSF Grant PHY-1125565) with support of the Gordon and Betty Moore Foundation (GBMF-2644). OLC is partially supported by the Natural Sciences and Engineering Research Council of Canada (NSERC).

\clearpage
\appendix

\section{Mathematical proofs}

We now detail the mathematical proofs of Sec.~\ref{sec:tunable_Hamiltonian}.

\subsection{Proof of Lemma~\ref{lem:GOT-swap}}\label{subsec:proof-lemma-GOT}

To prove Theorem~\ref{thm:charge-projector}, we need to first prove Lemma~\ref{lem:GOT-swap} which is a restatement of the Great Orthogonality theorem, Fact~\ref{fact:GOT}.

\begin{lem*}[Basis-independent GOT]
\begin{equation}
\sum_{g\in G}\Gamma(g)\otimes\Lambda(g^{-1})=\frac{\left|G\right|}{d_{\Gamma}}\delta_{\Gamma\Lambda}S
\end{equation}
where $S$ is the swap operator, i.e., $S:\mathbb{C}^{d}\times\mathbb{C}^{d}\to\mathbb{C}^{d}\times\mathbb{C}^{d}$
is defined by $S\left(|i\rangle\otimes|j\rangle\right)=|j\rangle\otimes|i\rangle$.\end{lem*}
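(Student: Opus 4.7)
The plan is to reduce the claimed operator identity to a matrix-element identity and then invoke the standard GOT (Fact~\ref{fact:GOT}) directly. Since every finite-dimensional representation of a finite group is equivalent to a unitary one, I would first assume without loss of generality that both $\Gamma$ and $\Lambda$ are unitary. Under this assumption, $\Lambda(g^{-1})=\Lambda(g)^{-1}=\Lambda(g)^{\dagger}$, so the matrix elements satisfy the key identity
\begin{equation}
\bigl(\Lambda(g^{-1})\bigr)_{kl} = \overline{\bigl(\Lambda(g)\bigr)_{lk}}.
\end{equation}

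Next, I would expand the matrix element of the left-hand side in the tensor-product basis $\{|i\rangle\otimes|k\rangle\}$, using the row/column indices $(ik)$ and $(jl)$ for the first and second tensor factor:
\begin{equation}
\Bigl(\sum_{g\in G}\Gamma(g)\otimes\Lambda(g^{-1})\Bigr)_{(ik),(jl)} = \sum_{g\in G}\bigl(\Gamma(g)\bigr)_{ij}\,\overline{\bigl(\Lambda(g)\bigr)_{lk}}.
\end{equation}
Applying the standard GOT with the indices $i'\mapsto l$ and $j'\mapsto k$ gives $\tfrac{|G|}{d_\Gamma}\delta_{\Gamma\Lambda}\delta_{il}\delta_{jk}$. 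I would then recognize the right-hand side as matrix elements of the swap operator, since $(S)_{(ik),(jl)}=\langle i,k|S|j,l\rangle=\langle i,k|l,j\rangle=\delta_{il}\delta_{kj}$. Matching the two expressions in every basis component proves the lemma.

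The proof is essentially a one-line repackaging of the GOT, so there is no real mathematical obstacle; the only things to watch are (i) the passage to unitary representatives (needed so that $\Lambda(g^{-1})$ can be written in terms of the complex conjugate $\overline{\Lambda(g)}$, which is what the standard GOT controls) and (ii) a careful bookkeeping of which index of the tensor factor corresponds to which index in the GOT. An alternative, more conceptual, route would be to use Schur's lemma: one checks that $T\equiv\sum_g \Gamma(g)\otimes\Lambda(g^{-1})$ intertwines the diagonal action $h\mapsto \Gamma(h)\otimes\Lambda(h)$ with itself via the substitution $g\mapsto h^{-1}gh$, and then identifies $T$ as proportional to the unique intertwiner $S$ on irreducible isotypic components, fixing the scalar by taking a trace. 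I would however prefer the direct matrix-element argument above, since it is shorter and makes the appearance of the swap transparent.
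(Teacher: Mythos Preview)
Your proposal is correct and follows essentially the same route as the paper: expand both tensor factors in matrix elements, use unitarity to rewrite $\bigl(\Lambda(g^{-1})\bigr)_{k\ell}=\overline{\bigl(\Lambda(g)\bigr)_{\ell k}}$, apply the standard GOT, and identify the resulting $\delta_{i\ell}\delta_{jk}$ with the matrix elements of the swap. The paper's proof is exactly this computation written out as an operator chain in $\ketbra{i}{j}\otimes\ketbra{k}{\ell}$ notation rather than in index form, but the logic and every step are the same.
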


\begin{proof}
The proof is a sequence of simplifications. The GOT is specifically used to simplify Eq.~\eqref{eq:use-GOT}:

\begin{eqnarray}
\sum_{g\in G}\Gamma(g)\otimes\Lambda(g^{-1}) \\
=  \sum_{g\in G}\sum_{ij}\left(\Gamma(g)\right)_{ij}\ketbra{i}{j}\otimes\sum_{k\ell}\left(\Lambda(g^{-1})\right)_{k\ell}\ketbra{k}{\ell}\\
  =  \sum_{g\in G}\sum_{ij}\left(\Gamma(g)\right)_{ij}\ketbra{i}{j}\otimes\sum_{k\ell}\overline{\left(\Lambda(g)\right)_{\ell k}}\ketbra{k}{\ell}\\
  =  \sum_{ijk\ell}\sum_{g\in G}\left(\Gamma(g)\right)_{ij}\overline{\left(\Lambda(g)\right)_{\ell k}}\ketbra{i}{j}\otimes\ketbra{k}{\ell} \label{eq:use-GOT} \\ 
  =  \sum_{ijk\ell}\frac{|G|}{d_{\Gamma}}\delta_{\Gamma\Lambda}\delta_{i\ell}\delta_{jk}\ketbra{i}{j}\otimes\ketbra{k}{\ell}\\
  =  \frac{|G|}{d_{\Gamma}}\delta_{\Gamma\Lambda}\sum_{ij}\ketbra{i}{j}\otimes\ketbra{j}{i}\\
  =  \frac{\left|G\right|}{d_{\Gamma}}\delta_{\Gamma\Lambda}S
\end{eqnarray}

\end{proof}

\subsection{Proof of Theorem~\ref{thm:charge-projector}} \label{subsec:proof-orthonormality}

We can now prove Theorem \ref{thm:charge-projector}.

\begin{thm*}[Orthogonality of charge projectors]
The operators defined by Eq. \eqref{eq:charge_projector} are orthonormal projectors
\begin{equation}
A_{\Gamma}A_{\Lambda}=\delta_{\Gamma\Lambda}A_{\Gamma}
\end{equation}
\end{thm*}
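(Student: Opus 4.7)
The plan is to expand the product $A_\Gamma A_\Lambda$ using the definition in Eq.~\eqref{eq:charge_projector} and reduce it to a single sum of vertex operators weighted by characters, so that it can be compared directly with $A_\Gamma$. First I would write
\begin{equation}
A_\Gamma A_\Lambda = \frac{d_\Gamma d_\Lambda}{|G|^2}\sum_{g,h\in G}\chi_\Gamma(g)\chi_\Lambda(h)\,\mathcal{A}_g \mathcal{A}_h .
\end{equation}
The vertex operators $\mathcal{A}_g$ are built from the regular representations $L^\pm$ and therefore form a representation of $G$ on the four neighboring edges, so $\mathcal{A}_g \mathcal{A}_h = \mathcal{A}_{gh}$. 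Substituting $k=gh$ and collecting terms yields
\begin{equation}
A_\Gamma A_\Lambda = \frac{d_\Gamma d_\Lambda}{|G|^2}\sum_{k\in G}\mathcal{A}_k\,\Bigl(\sum_{g\in G}\chi_\Gamma(g)\chi_\Lambda(g^{-1}k)\Bigr) .
\end{equation}

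The core of the argument is to evaluate the inner convolution of characters using Lemma~\ref{lem:GOT-swap}. Writing each character as a trace of its representation, I would recognize
\begin{equation}
\chi_\Gamma(g)\chi_\Lambda(g^{-1}k) = \mathrm{Tr}\!\bigl[(\Gamma(g)\otimes\Lambda(g^{-1}))(\mathbbm{1}\otimes\Lambda(k))\bigr] .
\end{equation}
Summing over $g$ and pulling the sum inside the trace, Lemma~\ref{lem:GOT-swap} replaces $\sum_g \Gamma(g)\otimes\Lambda(g^{-1})$ by $\frac{|G|}{d_\Gamma}\delta_{\Gamma\Lambda} S$. Using the elementary identity $\mathrm{Tr}[S(\mathbbm{1}\otimes M)] = \mathrm{Tr}[M]$, the inner sum collapses to $\frac{|G|}{d_\Gamma}\delta_{\Gamma\Lambda}\chi_\Gamma(k)$.

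Plugging this back in and noting that $d_\Lambda = d_\Gamma$ whenever $\delta_{\Gamma\Lambda}\neq 0$, the prefactors rearrange to exactly reproduce the definition of $A_\Gamma$:
\begin{equation}
A_\Gamma A_\Lambda = \delta_{\Gamma\Lambda}\,\frac{d_\Gamma}{|G|}\sum_{k\in G}\chi_\Gamma(k)\,\mathcal{A}_k = \delta_{\Gamma\Lambda}\,A_\Gamma .
\end{equation}
Idempotency ($\Gamma=\Lambda$) gives the projector property, and orthogonality ($\Gamma\neq\Lambda$) is immediate. Self-adjointness, needed for the operators to be orthogonal projectors in the Hilbert-space sense, follows separately from the observation that $\chi_\Gamma(g^{-1})=\overline{\chi_\Gamma(g)}$ together with $\mathcal{A}_g^\dagger = \mathcal{A}_{g^{-1}}$.

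The only nontrivial step is the reduction of the character convolution, and this is precisely where Lemma~\ref{lem:GOT-swap} does the work; once the basis-independent form of GOT is in hand, the rest is a change of summation variable and bookkeeping. Thus I expect no real obstacle beyond verifying that $\mathcal{A}_g$ forms a representation (which is standard for a tensor product of left/right regular representations with matching edge orientations) and the trivial trace identity $\mathrm{Tr}[S(\mathbbm{1}\otimes M)]=\mathrm{Tr}[M]$.
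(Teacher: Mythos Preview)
Your proposal is correct and follows essentially the same route as the paper: expand the product using $\mathcal{A}_g\mathcal{A}_h=\mathcal{A}_{gh}$, reduce to the character convolution $\sum_g \chi_\Gamma(g)\chi_\Lambda(g^{-1}k)$, rewrite it as $\mathrm{Tr}\bigl[(\sum_g \Gamma(g)\otimes\Lambda(g^{-1}))(\mathbbm{1}\otimes\Lambda(k))\bigr]$, and invoke Lemma~\ref{lem:GOT-swap} together with $\mathrm{Tr}[S(\mathbbm{1}\otimes M)]=\mathrm{Tr}[M]$. The only addition you make beyond the paper is the explicit remark on self-adjointness, which the paper does not spell out.
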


\begin{proof}
Simple algebra shows that 
\begin{eqnarray}
A_{\Gamma}^s A_{\Lambda}^s & = & \frac{d_{\Gamma}d_{\Lambda}}{|G|^{2}}\sum_{g,g'\in G}\chi_{\Gamma}(g)\chi_{\Lambda}(g') \cA_{g}^s \cA_{g'}^s\\
 & = & \frac{d_{\Gamma}d_{\Lambda}}{|G|^{2}}\sum_{h\in G}\underbrace{\sum_{g\in G}\chi_{\Gamma}(g)\chi_{\Lambda}(g^{-1}h)}_{(*)} \cA_{h}^s
\end{eqnarray}

We thus would like to prove that the $(*)$ term is proportional
to $\delta_{\Gamma\Lambda} \cdot \chi_{\Lambda}(h)$. 

Using the fact that $\mbox{Tr}\left[A\otimes B\right]=\mbox{Tr}\left[A\right]\mbox{Tr}\left[B\right]$,
one can rewrite the $(*)$ term as
\begin{equation}
(*)=\mbox{Tr}\left[\left(\sum_{g\in G}\Gamma(g)\otimes\Lambda(g)^{\dagger}\right)\left(\mathbb{I}\otimes\Lambda(h)\right)\right] .
\end{equation}
We can now use Lemma \ref{lem:GOT-swap} to express the trace as 
\begin{eqnarray}
(*) & =\delta_{\Gamma\Lambda} & \frac{|G|}{d_{\Gamma}}\mbox{Tr}\left[\sum_{ij}(\ketbra{i}{j}\otimes\left(\ketbra{j}{i}\right)\Lambda(h)\right]\\
 & = & \delta_{\Gamma\Lambda}\frac{|G|}{d_{\Gamma}}\sum_{ij}\delta_{ij}\langle i|\Lambda(h)|j\rangle\\
 & = & \delta_{\Gamma\Lambda}\frac{|G|}{d_{\Gamma}}\sum_{i}\left(\Lambda(h)\right)_{ii}\\
 & = & \delta_{\Gamma\Lambda}\frac{|G|}{d_{\Gamma}}\chi_{\Lambda}(h)
\end{eqnarray}
which concludes the proof of Theorem \ref{thm:charge-projector}.
\end{proof}

\subsection{Proof of Lemma~\ref{lem:flux-permutation}}\label{subsec:proof-lemma-flux}

We prove Lemma~\ref{lem:flux-permutation}.

\begin{lem*}[Flux permutation by vertex operators] For a plaquette $p$ and vertex $v$ that form a site, $(p,v)=s$
 \begin{equation}
 B^{(p)}_g = \cA^{(v)}_{h^{-1}} B^{(p)}_{hgh^{-1}} \cA^{(v)}_h ; \label{eq:appendix_commutation-lemma} 
 \end{equation}

for a plaquette $p$ and vertex $v$ that are parts of different sites, $p \in s_1$, $v\in s_2$, $s_1 \neq s_2$
 \begin{equation}
 B^{(p)}_g = \cA^{(v)}_{h^{-1}} B^{(p)}_{g} \cA^{(v)}_h . \label{eq:appendix_commutation-lemma2} 
 \end{equation}

\end{lem*}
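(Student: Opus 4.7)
The plan is to prove both identities by explicit computation on the 4-local supports, combining the elementary commutation relations between the single-edge multiplication operators $L^{\pm}_k$ and the diagonal projectors $T^{\pm}_a$ with a change of summation variables in the definition of $B^p$. The guiding physical picture is that $\cA^v_h$ implements a gauge transformation supported at the vertex $v$: a plaquette holonomy based at $v$ gets conjugated by $h$, while a loop that merely passes through $v$ without being based there is left unchanged. The first identity of the lemma encodes the former case (basepoint) and the second encodes the latter (non-basepoint or disjoint).

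As a preliminary step I would derive, by applying each side to a basis state $\ket{z}$ and using the definitions of $L^\pm$ and $T^\pm$ from Sec.~\ref{subsec:quantum_double}, the four single-edge commutation identities
\begin{align}
L^+_k T^+_a (L^+_k)^{-1} &= T^+_{ka}, \\
L^+_k T^-_a (L^+_k)^{-1} &= T^-_{ak^{-1}}, \\
L^-_k T^+_a (L^-_k)^{-1} &= T^+_{ak^{-1}}, \\
L^-_k T^-_a (L^-_k)^{-1} &= T^-_{ka}.
\end{align}
These precisely control how the index of a $T^\pm$ factor is relabeled when conjugated by a multiplication operator on the same edge.

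For part two, if $p$ and $v$ share no edge then $\cA^v_h$ and $B^p_g$ have disjoint supports and commute outright, so the identity reduces to $\cA^v_{h^{-1}} \cA^v_h = \id$, which holds because $\cA^v$ is a group representation. If $v$ is one of the other vertices of $p$, then exactly two edges of $p$ meet at $v$; applying the commutation identities on those two edges inserts factors of $h^{\pm 1}$ at adjacent positions in the plaquette product $h_1 h_2 h_3^{-1} h_4^{-1}$, and a linear substitution $h_i \mapsto k_i$ absorbs them without altering the product constraint, reproducing $B^p_g$. This is the algebraic counterpart of the fact that a gauge transformation at an interior vertex of a closed loop cancels out between the "incoming" and "outgoing" contributions at that vertex.

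For part one, the two shared edges are both incident to the basepoint $v$, so the extra factors of $h^{\pm 1}$ produced by the commutation identities end up at the two ends of the loop product rather than at adjacent positions. The substitution $h_i \mapsto k_i$ then converts the constraint $h_1 h_2 h_3^{-1} h_4^{-1} = hgh^{-1}$ into $k_1 k_2 k_3^{-1} k_4^{-1} = g$, so the summand becomes $B^p_g$. The main obstacle throughout is bookkeeping: for each shared edge one must track both the vertex orientation (outgoing vs incoming, which determines $L^+$ vs $L^-$) and the plaquette orientation (with vs against traversal, which determines $T^+$ vs $T^-$), and then select the matching commutation identity and substitution. The gauge-theoretic picture serves as a reliable sanity check that forces the right choices and guards against sign errors.
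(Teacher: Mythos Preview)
Your proposal is correct and takes a genuinely different route from the paper's own proof. The paper argues at the level of states: it splits the Hilbert space as $I_g \oplus K_g$ (image and kernel of $B^p_g$), observes that $\cA^v_h$ sends a state of flux $g$ to one of flux $hgh^{-1}$ (or leaves the flux unchanged when $v$ is not the basepoint), and then checks the operator identity sector by sector. Your approach instead works purely at the operator level, deriving the four single-edge commutation relations between $L^\pm$ and $T^\pm$ and then reindexing the defining sum of $B^p$. Both are valid and essentially equivalent in content --- your commutation identities are exactly the infinitesimal form of the paper's ``$\cA_h$ conjugates the flux'' observation. What your approach buys is an explicit, mechanical derivation that never needs to appeal to the projector interpretation of $B^p_g$; what the paper's approach buys is brevity and a cleaner conceptual story, since it sidesteps the orientation bookkeeping you correctly flag as the main hazard. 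Your handling of the non-basepoint case (the $h^{\pm1}$ factors landing at adjacent positions and cancelling under substitution) is the algebraic counterpart of the paper's remark that ``for the other three respective positions one can dutifully check'' --- you actually sketch that check, which the paper leaves to the reader.
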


\begin{figure}
\begin{centering}
\includegraphics[width=0.5\textwidth]{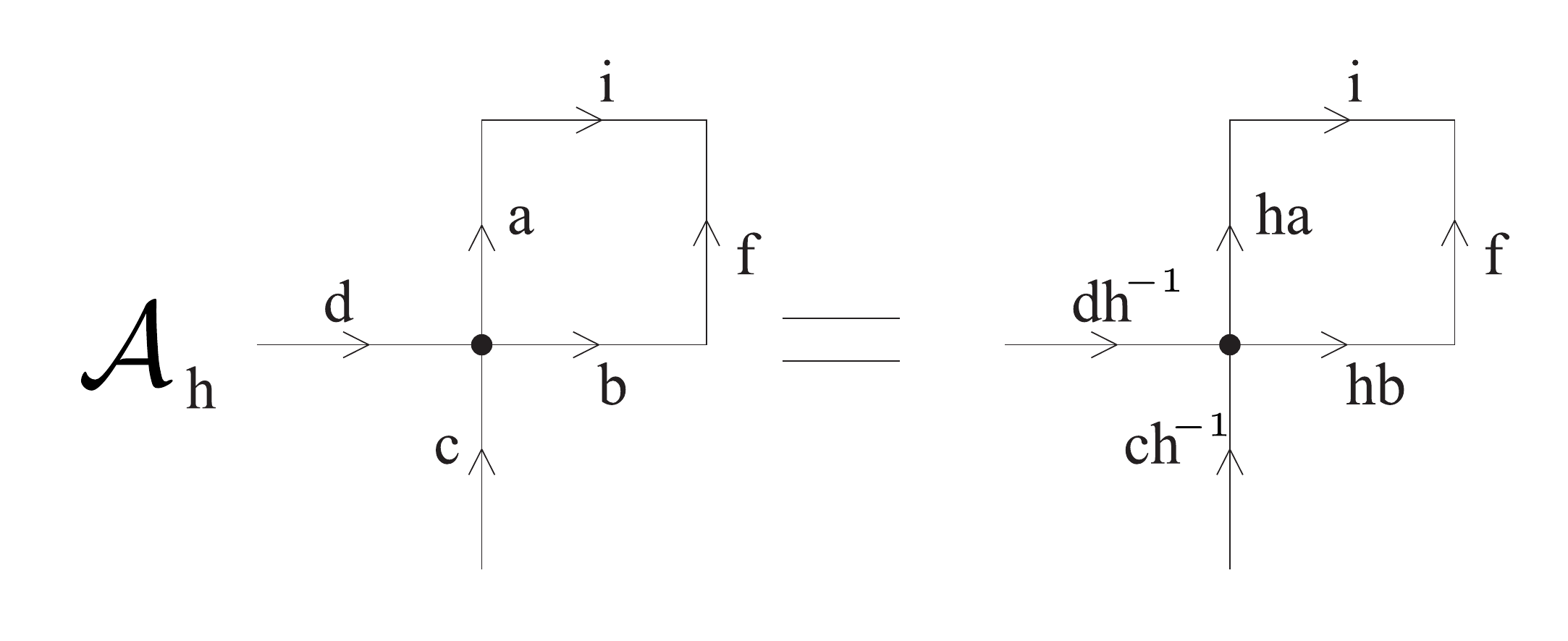}
\caption{Relative configuration of a vertex and a plaquette in the case when the commutation of charge and flux projectors is nontrivial. The figure shows how a vertex operator acts on these spins. Note that the flux around the plaquette, starting from the vertex, is $g=bfi^{-1}a^{-1}$ prior to the application of $\mathcal{A}_h$. Afterward, the flux is now $g'=hbfi^{-1}(ha)^{-1}=hgh^{-1}$.}
\label{fig:commutation}
\end{centering}
\end{figure}

\begin{proof}
 We will check the operator equality for an arbitrary state in which each spin is in a flux state (such states span the full (Hilbert) space).
 Note that the plaquette operator $B_g$ is in fact a projector unto states with flux $g$ threading the plaquette while states having a different flux are annihilated by $B_g$. Thus, the Hilbert space is split into a direct sum
 \begin{equation} \label{eq:direct-sum-proj}
  \mathcal{H}=I_g \oplus K_g 
 \end{equation}
 where $I_g$ (resp. $K_g$) denotes the image (resp. kernel) of the projector. The image is spanned by states with flux $g$ while states with other flux span the kernel. We will prove Eq.~\eqref{eq:appendix_commutation-lemma} first for a state in $I_g$ and then for a state in $K_g$.  
 
 For a state $\ket{\psi_g}$ whose flux is $g$, i.e., $B_g \ket{\psi_g}=\ket{\psi_g}$ the application of the vertex operator $\cA_h$ will act non-trivially on two spins around the plaquette and change its flux to $hgh^{-1}$ (when the plaquette and vertex operators act on the same site, see Fig.~\ref{fig:commutation}). Thus, $\cA_h\ket{\psi_g}$ is in the image of $B_{hgh^{-1}}$, i.e.,
 \begin{equation}
  \cA_h \ket{\psi_g} = B_{hgh^{-1}} \cA_h \ket{\psi_g}
 \end{equation}
 
 Finally, applying $\cA_{h^{-1}}$ will restore the spins into their original state and, in particular, restore the flux to $h^{-1}(hgh^{-1})h=g$, so that
 \begin{equation}
  \cA_{h^{-1}} B_{hgh^{-1}} \cA_h \ket{\psi_g} = \ket{\psi_g}.
 \end{equation}
 
 Let's now consider a state $\ket{\phi}$ whose flux is not $g$, i.e., $B_g \ket{\phi}=0$. That state is a linear combination of states with flux $f\neq g$. Let's assume that $\ket{\phi}$ has a well-defined flux $f$ (the general case will follow by linearity). Then,  $\cA_h \ket{\phi}$ will have flux $hfh^{-1}$ and will be annihilated by $B_{hgh^{-1}}$ since $hfh^{-1}\neq hgh^{-1}$. Thus, 
 \begin{equation}
  \cA_{h^{-1}} B_{hgh^{-1}} \cA_h \ket{\phi} = 0.
 \end{equation}
 
 Since we checked Eq.~\eqref{eq:appendix_commutation-lemma} on the two sectors of Eq.~\eqref{eq:direct-sum-proj}, it is valid for any state of the Hilbert space. Please note that we proved Eq.~\eqref{eq:appendix_commutation-lemma} only for one respective position of the vertex with respect to the plaquette. For the other three respective positions one can dutifully check that the proof is also valid, resulting in Eq.~\eqref{eq:appendix_commutation-lemma2}.
 
\end{proof}

\section{Induced representations of an arbitrary quantum double $\mathcal{D}(G)$}
\label{sec:D(G)_splitup}

A surprising feature of our refined quantum double Hamiltonian \eqref{eq:DS3_Hamiltonian} (see Eq.~\eqref{eq:massiveH_4body} for the general form) is that irreps of normalizers that are proper subgroups of $G$ do not have an associated Hamiltonian term. For instance, in the case of $\mathcal{D}(S_3)$, the irreps of $\mathbb{Z}_2$ and $\mathbb{Z}_3$ do not have an associated Hamiltonian term. How is it then that anyons D, E, F, G and H which are labelled by irreps of those two subgroups are correctly accounted for?

The reason they have not been forgotten is that the irreps of those subgroups appear when restricting the irrep of $S_3$ to the fluxes within a normalizer. For instance, if we know that a dyon has flux in the conjugacy class $C_y$ and that the charge on the vertex corresponds to the 2-dim irrep $\Gamma_2^{S_3}$, we should consider the action of this irrep restricted to the elements of the normalizer $\cN_y$. One can straightforwardly check that the 2-dim irrep of the group splits into two 1-dim irreps of the subgroup $\mathbb{Z}_3$, i.e., recall Eq.~\eqref{eq:splitup_Gamma2_Ny}:
\begin{equation}
 \Gamma^{S_3}_2 |_{\cN_y} = \Gamma^{\mathbb{Z}_3}_\omega \oplus \Gamma^{\mathbb{Z}_3}_{\bar{\omega}}.
\end{equation}
Thus, the anyons $G=(C_y,\Gamma^{\mathbb{Z}_3}_\omega)$ and $H=(C_y,\Gamma^{\mathbb{Z}_3}_{\bar{\omega}})$ are accounted for. However, our Hamiltonian will give them the same mass since it does not distinguish between them. This is a general feature of our construction in the sense that the splitting of irrep of the group $G$ to recover irreps of the normalizer will happen for any group $G$.

Indeed, the statements about the correspondence between representations of the group and its subgroups can be made rigorous for any group $G$. For any finite group $G$, the $A_{\Gamma^G}$ charge projector corresponding to irrep $\Gamma^G$ will contain in its image the particle with trivial flux and $\Gamma^G$ charge, as well as all particles that have non-trivial flux $C_h$ ($h\neq e$) and their charge corresponds to the restricted representation \cite{Serre12} of $\Gamma^G$ onto the appropriate normalizer subgroup $\cN_h$:
\bq
(C_e,\Gamma^G) &\subset & \Im[A_{\Gamma^G}] \\
(C_h, \Gamma^G|_{\cN_h}) &\subset & \Im[A_{\Gamma^G}]
\eq
where $\Im[O]$ denotes the image of operator $O$ and $\subset$ means that the anyon labeled by the pair (conjugacy class, irrep) corresponds to a subspace located within the vector space on the right hand-side.

If $\Gamma^G|_{\cN_h}$ is reducible on $\cN_h$, then anyons instead will correspond to the resulting irreps:
\bq
\Gamma^G|_{\cN_h} &=& \bigoplus_{i} \Gamma_i^{\cN_h} \\
\textrm{anyon label}_i &=& (C_h,\Gamma_i^{\cN_h})
\eq
and all such anyons ($\forall i$) will have the same energy. For example, for the group $G=S_3$, anyons G and H have the same energy.

Similarly, one might ask the converse question: if we take an anyon type $(C_h, \Gamma^{\cN_h})$, does the 4-local Hamiltonian account for it? The answer is yes; one needs to consider the \emph{induced} representation $\kappa^G$ from $\Gamma^{\cN_h}$ onto the full group $G$ \cite{Serre12}. In the case that the induced representation is irreducible on $G$, then that anyon labelled $(C_h, \Gamma^{\cN_h})$ corresponds to charge $\kappa^G$
\bq
\textrm{Ind}_{\cN_h}^G(\Gamma^{\cN_h}) &=& \kappa^G \\
(C_h, \Gamma^{\cN_h}) &\subset & \Im [A_{\kappa^G}]. 
\eq
whereas, in the case the induced representation is reducible on the group $G$, then the anyon labelled $(C_h, \Gamma^{\cN_h})$ corresponds to different charge flavors $\kappa_i^G$:
\bq
\textrm{Ind}_{\cN_h}^G(\Gamma^{\cN_h}) &=& \bigoplus_i \kappa_i^G \\
(C_h, \Gamma^{\cN_h}) &\subset & \Im [A_{\kappa_i^G}]  \;\; \forall i.
\eq
For example, for $G=S_3$, anyon F is in the image of both $A_{\Gamma_1^{S_3}}$ and $A_{\Gamma_{-1}^{S_3}}$, as $\Gamma_1^{S_3}$ and $\Gamma_{-1}^{S_3}$ are the irreducible components of $\textrm{Ind}_{\cN_y}^{S_3}(\Gamma_1^{\cN_y})$.

\end{document}